\documentclass[10pt]{article}

\usepackage{amsmath}
\usepackage{amssymb}
\usepackage{amsthm}
\usepackage{color,graphicx}	
\usepackage{float}
\usepackage{url}
\usepackage{colortbl}
\usepackage{pstricks}
\usepackage{rotating}
\usepackage{multirow}
\usepackage{calc}
\usepackage[dvips]{hyperref}


\newtheorem{definition}{Definition}[section]
\newtheorem{corollary}[definition]{Corollary}

\newtheorem{lemma}[definition]{Lemma}

\newtheorem{theorem}[definition]{Theorem}

\numberwithin{equation}{section}


\newcommand{\e}[1]{\mathbf{E}\left[ #1 \right]}

\newcommand{\pr}[1]{\text{Pr}\left( #1 \right)}

\title{%
	Repeated Matching Pennies \\ with Limited Randomness\footnote{This research was partially supported by NSF grants CCF-0829754 and DMS-0652521.}}

\author{Michele Budinich\footnote{This work was done while the author was visiting Northwestern University, Department of Electrical Engineering and Computer Science.} \\
	{\bf \sf m.budinich@imtlucca.it} 
	\and
	Lance Fortnow \\
	{\bf \sf fortnow@eecs.northwestern.edu}
	}

\date{\today}

\begin{document}

\maketitle

\begin{abstract}
We consider a repeated Matching Pennies game in which players
have limited access to randomness. Playing the (unique) Nash
equilibrium in this $n$-stage game requires $n$ random bits.
Can there be Nash equilibria (or $\varepsilon$-Nash equilibria)
that use less than $n$ random coins?

Our main results are as follows
\begin{itemize}
\item We give a full characterization of approximate
    equilibria, showing that, for any $\gamma \in [0,1]$,
    the game has a $\gamma$-Nash equilibrium if and only if
    both players have $(1 - \gamma)n$ random coins.
\item When players are bound to run in polynomial time with $n^\delta$ bits of randomness,
    approximate Nash equilibria can exist if and only if
    one-way functions exist.
\item It is possible to trade-off randomness for running
    time. In particular, under reasonable assumptions, if
    we give one player only $O(\log n)$ random coins but
    allow him to run in arbitrary polynomial time with $n^\delta$ bits of randomness and we
    restrict his opponent to run in time $n^k$, for some
    fixed $k$, then we can sustain an $\varepsilon$-Nash
    equilibrium.
\item When the game is played for an infinite amount of
    rounds with time discounted utilities, under reasonable
    assumptions, we can reduce the amount of randomness
    required to achieve a $\varepsilon$-Nash equilibrium to
    $n^\delta$, where $n$ is the number of random coins
    necessary to achieve an approximate Nash equilibrium in
    the general case.
\end{itemize}
\end{abstract}


\newpage
\section{Introduction}
In the classical setting of Game Theory, one of the core
assumptions is that all participating agents are ``fully
rational''. This amounts not only to the fact that an agent
must be able to make optimal decisions, given the other
players' actions, but also to the fact that he must understand
how these actions will affect the behavior of all other
participants. If this is the case, a Nash equilibrium can be
viewed as a set of strategies in which each agent is simply
computing his best response given his opponents' actions.
However, in real world strategic interactions, people often
behave in manners that are not fully rational. There are many
reasons behind non-rational behavior, we focus on two:
limitations on computation and limitations on randomness.

Since the work of Herbert Simon~\cite{simon1955behavioral},
much research has focused on defining models that take
computational issues into account. In recent years, the idea
that the full rationality assumption is often unrealistic has
been formalized using tools and ideas from computational
complexity. It is in fact easy to come up with settings, as in
Fortnow and Santhanam \cite{fortnow2010bounding}, in which
simply computing a best response strategy involves solving a
computationally hard problem. Furthermore there is strong
evidence that, in general, the problem of finding a Nash
equilibrium is computationally difficult for matrix games
(Daskalakis, Goldberg and
Papadimitriou \cite{DBLP:journals/siamcomp/DaskalakisGP09}, Chen and
Deng \cite{chen2006settling}).

Traditionally bounded rationality has focused on two
computational resources: time and space. In this paper we focus
on another fundamental resource: randomness.

It is a basic fact that games in which agents are not allowed
to randomize might have no Nash equilibrium. In this sense,
randomness is essential in game theory. We focus on a simple
two player zero-sum game that captures this: Matching Pennies
(Figure \ref{fig:game}).
      \begin{figure}[h]
      \begin{center}
      \includegraphics[width=1.8in]{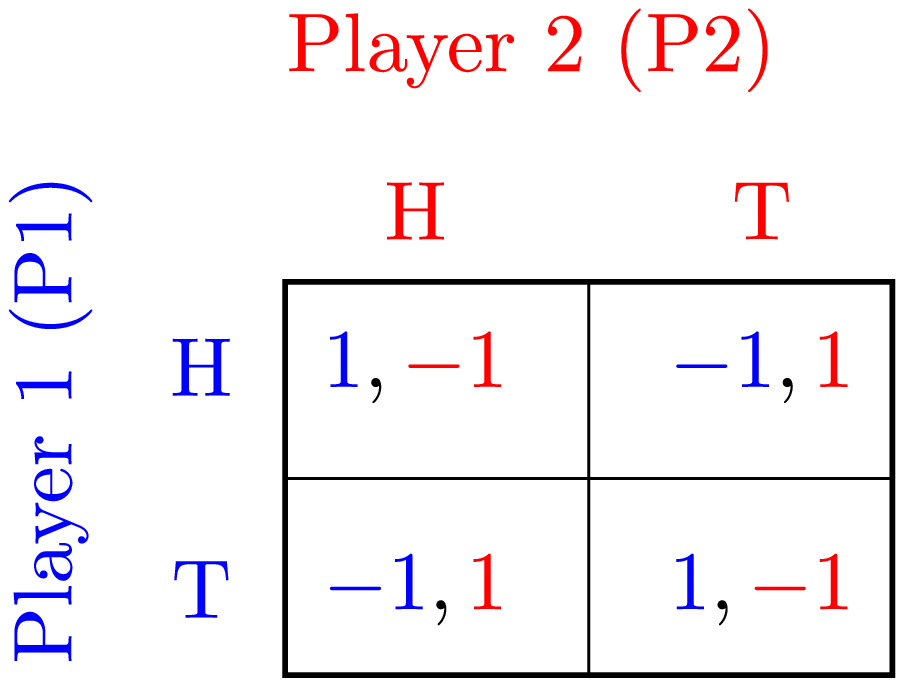}
      \end{center}
      \caption{The payoff bimatrix for the Matching Pennies game.}
      \label{fig:game}
      \end{figure}
Specifically, we consider the repeated version of Matching
Pennies, played for $n$ rounds. In this game, the unique Nash
equilibrium is the one in which, at every round, both players
choose one of their two strategies uniformly at random. The
algorithm that implements this strategy requires $n$ random
coins, one for each round. The main question we address in this
paper is: can there be Nash equilibria if the amount of
randomness available to both players is less than $n$?

First we show that, in general, the game cannot have a Nash
equilibrium in which both players have only a fraction of $n$
random coins. In particular, when we give both players $n(1
-\gamma)$ random coins, we can only achieve a $\gamma$-Nash
equilibrium. This turns out to be tight, in the sense that we
can show that any game with a $\gamma$-Nash equilibrium both
players must have at least $n(1 - \gamma)$ coins.

The proof of this fact, however, relies on the players' ability
to implement a strategy that runs in exponential time. We then
consider games in which the players' strategies are
polynomially-bounded. Using ideas developed in cryptography and
computational complexity we show that, in this setting,
$\varepsilon$-Nash equilibria that use only $n^{\delta}$ coins
exist if and only if one-way functions exist.

We also show that the amount of randomness can be ``traded''
for time. If we allow one of the players to run in arbitrary
polynomial time, but use only $O(\log n)$ bits, we can still
achieve a $\varepsilon$-Nash equilibrium if we restrict his
opponent to run in time $n^k$ for some fixed $k > 0$, while
giving him $n^{\delta}$ random bits.

Finally we consider an infinitely repeated game with time
discounted utilities. In this case, in general, for any
discount factor $\delta$ and approximation $\varepsilon$, we
can always achieve a $\varepsilon$-Nash equilibrium with only
$n$ random coins, if $n$ is large enough. When we limit
players' strategies to polynomial size circuits, we can reduce
the amount of randomness to $n^{\delta}$, for any $\delta > 0$.

\subsubsection*{Related Work}
There are many recent approaches to bounded rationality using a
computational complexity perspective. For instance Halpern and
Pass \cite{halpernpass2010} study games in which players'
strategies are Turing machines. The idea of considering
randomness as a costly resource in game theory has received
only limited attention. Kalyanaraman and Umans
\cite{kalyanaraman2007algorithms} study zero-sum games, and
give both an efficient deterministic algorithm for finding
$\varepsilon$-Nash equilibria, as well as a weaker, but more
general, result in the spirit of our Lemma
\ref{lemma:unbounded1}, giving a randomness-efficient adaptive
on-line algorithm for playing repeated zero-sum games. Hu
\cite{hu2009complexity} also considers a similar setting but he
is concerned with computability rather than complexity. He
considers infinitely repeated plays of 2 player zero-sum games
that have no pure strategy Nash equilibrium, and in which
players have a set of feasible actions, which represents both
the strategies they can play and the strategies they can
predict. In this setting Hu gives necessary and sufficient
conditions for the existence of Nash equilibria. Finally
Gossner and Tomala \cite{gossner2008entropy}, give entropy
bounds on Bayesian learning in a game theoretic setting, in a
more general framework then this paper. Their results applied to
Matching Pennies do not achieve the tight bounds we get in
Lemma~\ref{lemma:unbounded1}.
\\

The rest of the paper is organized as follows. In Section
\ref{sec:background} we introduce the notation and known
results used. Section \ref{sec:info} presents an information
theoretic impossibility result. Section \ref{sec:eff} considers
players whose strategies are limited to polynomial sized
Boolean circuit families, while Sections \ref{sec:nw} and
\ref{sec:infinite} give extensions of the main results to
complexity pseudorandom number generators and infinitely repeated versions of the game.

\section{Background and Definitions} \label{sec:background}

\subsection{Game Theory Notation}
Throughout the paper we consider a repeated game of Matching
Pennies. We focus on this game because it captures one of the
fundamental aspects of randomness in game theory. Studying such
a simple game also allows us to get tight bounds. However,
variations of our results extend to other similar 2 person
zero-sum repeated games.

The payoffs at each round are shown in Figure \ref{fig:game}.
Let $h: \{\text{H,T}\} \times \{\text{H,T}\} \to \{-1, 1\}$, be
the payoff to Player 1 (P1), and $-h$ the payoff for P2. When
we allow the players to randomize, we denote as $S_i$ a
randomized strategy on $\Delta\{H, T\}$ for player $i$. P1's
expected payoff in one round is
	\[
	\e{h(S_1, S_2)} = \sum_{s_1,s_2 \in \{\text{H,T}\}} \pr{ S_1 = s_1} \pr{S_2 = s_2} h(s_1, s_2).
	\]
Let $u:\{\text{H,T\}}^n \times \{\text{H,T\}}^n \to \{-n,
\cdots, n\}$ be P1's cumulative payoff when the game is played
for $n$ rounds. In the repeated game, mixed strategies can be
viewed as distribution over sequences of length $n$ that are dependend on the opponent's strategy. Given the adversary's strategy, let $R_i =
(r_i^1, \cdots, r_i^n) \in \Delta\{\text{H,T\}}^n$ denote a
randomized strategies for player $i$. P1's expected cumulative
payoff is
	\[
	\e{u(R_1, R_2)} = \sum_{t = 1}^n \e{h(r_1^t, r_2^t)}.
	\]
Finally we define the expected average payoff to P1 for the
$n$-round game as
	\[
	\e{U(R_1, R_2)} = \frac{\e{u(R_1, R_2)}}{n},
	\]
and consequently P2's expected payoff is $-\e{U(R_1, R_2)}$. To
denote player's $i$ payoff we will sometimes use the standard
notation $\e{U(R_i, R_{-i})}$, where $R_i$ is player's $i$
mixed strategy and $R_{-i}$ is his opponent's mixed strategy.

\begin{definition}[Nash equilibrium]
A pair of mixed strategies $(R_1, R_2)$ is a Nash equilibrium
for the $n$-stage Matching Pennies game if, for $i = 1,2$:
	\[
	\e{U(R_i, R_{-i})} \ge \e{U(R_i', R_{-i})} \quad \text{for all } R_i' \in \Delta\{\text{H,T}\}^n.
	\]
\end{definition}

In some cases we will consider a relaxed notion of equilibrium,
namely $\varepsilon$-Nash equilibrium.

\begin{definition}[$\varepsilon$-Nash equilibrium]
A pair of mixed strategies $(R_1, R_2)$ is a $\varepsilon$-Nash
equilibrium for the $n$-stage Matching Pennies game if, for $i
= 1,2$:
	\[
	\e{U(R_i, R_{-i})} \ge \e{U(R_i', R_{-i})} - \varepsilon \quad \text{for all } R_i' \in \Delta\{\text{H,T}\}^n.
	\]
\end{definition}

\subsection{Complexity and Pseudorandomness}
We give a brief description of the pseudorandomness tools we
need for this paper. For more details we recommend the
textbooks of Arora and Barak~\cite{Arora:1253363} and
Goldreich~\cite{goldreich2004foundations}.

The model of computation used throughout most of the paper is
based on Boolean Circuits. We consider circuits with
\textsf{AND, OR} and \textsf{NOT} gates, and denote by $C_n$ a
circuit with $n$ input nodes. A circuit family $\{C_i\}_{i \in
\mathbb{N}}$ is an infinite collection of circuits, intuitively
one for each input length.

The size of a circuit $|C_n|$ is the number of gates. A circuit
family is polynomial sized if there is a $k >0$ such that, for
all $n$, $|C_n| \le n^k$. The class of languages recognizable
by families of polynomial sized circuits is called
\textsf{P/Poly}. Any language that can be decided in polynomial
time by a deterministic or randomized Turing machine is also in
\textsf{P/Poly}. Formally $\textsf{P} \subseteq \textsf{BPP}
\subseteq \textsf{P/Poly}$.

A function is one-way if it is easy to compute and hard to
invert.
\begin{definition}[One-way Function]
A one-way function is a polynomial-time computable function
$f:\{0,1\}^n \to \{0,1\}^n$ such that, for all polynomial size
circuits $D$ and $y = f(x)$ (where $x$ is chosen uniformly at
random on $\{0,1\}^n$), $\Pr(f(D(y)) = f(x)) < n^{-c}$ for all
$c
> 0$ and sufficiently large $n$.
\end{definition}

Informally, two objects are indistinguishable if no polynomial
sized circuit family can tell them apart with noticeable
probability.

\begin{definition}[Indistinguishability]
Let $X,Y$ be two random variables on $\{0,1\}^n$. We say that
$X$ and $Y$ are computationally indistinguishable if for every
family of polynomial size circuits $\left\{ C_i \right\}_{i \in
\mathbb{N}}$, every $c > 0$ and for sufficiently large $n$
	\[
	\Big| \pr{C_n\left( X \right) = 1} - \pr{C_n\left( Y \right) = 1} \Big| < \frac{1}{n^c}.
	\]
\end{definition}

A cryptographic pseudorandom number generator (PRNG) is a
deterministic algorithm whose output is computationally
indistinguishable from the uniform distribution, provided that
it's input is truly random. We will denote by $U_k$ a random
variable uniformly distributed on $\{0,1\}^k$.

\begin{definition}[Cryptographic PRNG]
A cryptographic pseudorandom number generator is a
deterministic polynomial time algorithm $G: \{0,1\}^{l(n)} \to
\{0,1\}^{n}$, where $l(n)<n$ is a polynomial time computable
function, such that $G(U_{l(n)})$ and $U_{n}$ are
computationally indistinguishable.
\end{definition}

There are two basic properties about pseudorandom number generators that we will use.
One relates the notion of pseudorandomness to the notion of
predictability.

\begin{definition}[Unpredictable]
Let $G: \{0,1\}^{l(n)} \to \{0,1\}^{n}$ be a polynomial time
algorithm, and $G(x) = (y_1, \cdots, y_n)$. We call $G$
unpredictable if for every family of polynomial size circuits
$\left\{ D_i \right\}_{i \in \mathbb{N}}$, all $c>0$ and for
sufficiently large $n$
	\[
	\pr{  D_i\left( y_1, \dots, y_{i-1} \right) = y_{i} } \le \frac{1}{2} + \frac{1}{n^c}.
	\]
\end{definition}

Intuitively a pseudorandom number generator must be unpredictable, otherwise we could
easily build a test for it by using the predictor circuits. In
1982 Yao \cite{yao1982theory} proved the opposite implication,
thus establishing the following theorem.

\begin{theorem}[Yao's Theorem] \label{yaothm}
A polynomial time algorithm $G: \{0,1\}^{l(n)} \to \{0,1\}^{n}$
is unpredictable if and only if $G$ is a pseudorandom number generator.
\end{theorem}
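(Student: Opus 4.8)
The plan is to prove Yao's theorem in its two directions, with the easy direction first and the hard (hybrid argument) direction second. For the forward direction, suppose $G$ is a pseudorandom number generator; I want to show it is unpredictable. Assume for contradiction that some polynomial-size circuit family $\{D_i\}$ predicts the $i$-th bit with probability $\tfrac12 + \tfrac{1}{n^c}$ for infinitely many $n$. I would build a distinguisher $C_n$ that, on input $z = (z_1,\dots,z_n)$, runs $D_i(z_1,\dots,z_{i-1})$ and outputs $1$ iff the prediction equals $z_i$. On $G(U_{l(n)})$ this outputs $1$ with probability $\tfrac12 + \tfrac{1}{n^c}$ by assumption, while on $U_n$ the bit $z_i$ is independent of $z_1,\dots,z_{i-1}$, so $C_n$ outputs $1$ with probability exactly $\tfrac12$. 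This gives a distinguishing advantage of $\tfrac{1}{n^c}$, contradicting pseudorandomness. (A minor technicality: the definition of unpredictable quantifies over a single index $i$ per circuit in the family, so one should be slightly careful matching up the index; this is routine.)

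For the converse — the substantive direction — assume $G$ is unpredictable; I want to show $G(U_{l(n)})$ and $U_n$ are computationally indistinguishable. This is the classical hybrid argument. Define hybrid distributions $H^{(0)}, H^{(1)}, \dots, H^{(n)}$, where $H^{(i)}$ consists of the first $i$ bits of $G(U_{l(n)})$ followed by $n-i$ truly uniform independent bits. Then $H^{(0)} = U_n$ and $H^{(n)} = G(U_{l(n)})$. If some polynomial-size circuit family $\{C_n\}$ distinguishes $H^{(n)}$ from $H^{(0)}$ with advantage $\tfrac{1}{n^c}$ for infinitely many $n$, then by a telescoping/averaging argument there is an index $i = i(n)$ such that $C_n$ distinguishes $H^{(i)}$ from $H^{(i-1)}$ with advantage at least $\tfrac{1}{n \cdot n^c}$. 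The only difference between $H^{(i)}$ and $H^{(i-1)}$ is that the $i$-th bit is $y_i$ (the true pseudorandom bit) versus a fresh uniform bit.

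From this gap I would construct a predictor $D_i$ for the $i$-th bit: on input $y_1,\dots,y_{i-1}$, sample a uniform bit $b$, complete the string with $n-i$ uniform bits to form a candidate input to $C_n$ with $b$ in position $i$, run $C_n$, and — using the standard ``Yao next-bit'' trick — output $b$ if $C_n$ answers $1$ and $\bar b$ otherwise (with the sign chosen according to which hybrid $C_n$ favors). A short calculation shows this predicts $y_i$ with probability $\tfrac12 + \tfrac{1}{n \cdot n^c} \ge \tfrac12 + \tfrac{1}{n^{c+1}}$, contradicting unpredictability. To make $D_i$ a genuine circuit I hard-wire the index $i(n)$ and the random completion bits as nonuniform advice (this is legitimate since we work with polynomial-size circuit families, not uniform machines).

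The main obstacle is the hybrid/next-bit step in the converse: getting the bookkeeping right so that the ``flip the output'' rule turns a distinguishing advantage on adjacent hybrids into a genuine prediction advantage, and handling the two cases depending on the sign of $\Pr[C_n(H^{(i)})=1] - \Pr[C_n(H^{(i-1)})=1]$ uniformly. Everything else — the telescoping over hybrids, the conversion of nonuniform advice into circuit gates, and the forward direction — is routine. I would cite \cite{yao1982theory} and the textbook treatments in \cite{Arora:1253363,goldreich2004foundations} for the full details and present the hybrid argument at the level of a proof sketch.
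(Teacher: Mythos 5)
Your proof sketch is correct: it is the standard argument (easy direction by turning a predictor into a distinguisher, converse by the hybrid/next-bit argument), which is exactly the proof the paper relies on. The paper itself does not prove this theorem --- it states it as a known result, citing Yao \cite{yao1982theory} and the textbooks \cite{Arora:1253363,goldreich2004foundations}, where the details you outline appear.
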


H\r{a}stad, Impagliazzo, Luby and Levin in 1999
\cite{hastad1999pseudorandom} showed how to construct pseudorandom number generators
with polynomial expansion based on one-way functions.

\begin{theorem}[PRNG's from one-way functions] \label{hillthm}
One way functions exist if and only if for every $\delta >0$
there is a pseudorandom number generator with $l(n) = n^\delta$.
\end{theorem}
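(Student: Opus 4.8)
The plan is to prove the two directions separately, using Yao's Theorem (Theorem~\ref{yaothm}) to certify the generators I build and a hybrid argument to control the seed length.

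For the ``if'' direction, I would show that a pseudorandom number generator is, after routine length-padding, itself a one-way function. Fix $\delta=1/2$, so that $G:\{0,1\}^{l(n)}\to\{0,1\}^{n}$ has $l(n)=\sqrt{n}\le n/2$ for large $n$, and set $f=G$. If some polynomial-size circuit family $A$ inverted $f$ on a non-negligible fraction of inputs, then the circuit $D$ with $D(y)=1$ iff $G(A(y))=y$ would distinguish $G(U_{l(n)})$ from $U_{n}$: it accepts the former with non-negligible probability, but accepts the latter with probability at most $2^{l(n)}/2^{n}\le 2^{-n/2}$, since at most $2^{l(n)}$ of the $2^{n}$ strings have a preimage under $G$. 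This contradicts pseudorandomness, so $f$ is one-way.

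For the ``only if'' direction I would proceed in stages. (1) \emph{Goldreich--Levin:} for a one-way $f$, the function $f'(x,r)=(f(x),r)$ has the inner product $\langle x,r\rangle \bmod 2$ as a hardcore bit; one proves this by a list-decoding argument for the Hadamard code, converting any circuit that predicts the bit with advantage $\varepsilon$ into an inverter for $f$ that succeeds with probability $\mathrm{poly}(\varepsilon)$. (2) \emph{Permutation case:} if $f$ is a one-way permutation, then in $G(x,r)=(f(x),r,\langle x,r\rangle)$ the prefix $(f(x),r)$ is uniformly distributed while the last bit is unpredictable from it, so $G$ is unpredictable and hence, by Yao's Theorem, a pseudorandom number generator with one bit of stretch. (3) \emph{Stretching:} iterating such a $G$ on the ``seed portion'' of its output $t$ times and collecting the extracted bits yields, via a hybrid argument over the $t$ intermediate distributions, a generator $G':\{0,1\}^{m}\to\{0,1\}^{m^{t}}$. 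Choosing an integer $t\ge 1/\delta$ and instantiating with $m=n^{1/t}$ (so $m^{t}=n$) gives $l(n)=n^{1/t}\le n^{\delta}$; padding the seed with ignored bits brings it up to $n^{\delta}$ exactly.

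The main obstacle is removing the permutation assumption in step (2). An arbitrary one-way $f$ can be badly irregular, so $(f(x),r)$ need not be close to uniform and the preimage sizes are unknown. Following H\r{a}stad--Impagliazzo--Luby--Levin, I would instead extract the computational entropy hidden in $f$ using pairwise-independent hash functions together with the Leftover Hash Lemma: run many independent copies of $f$, hash the outputs down to roughly their min-entropy and the inputs down to slightly above the residual entropy, and argue through an entropy-gap (or next-bit) analysis that the resulting string is simultaneously longer than the seed and computationally indistinguishable from uniform; then apply the stretching of step (3). Making the parameters of this accounting line up — in particular bridging the gap between Shannon entropy and min-entropy without knowing the structure of $f$ — is the delicate technical core of the construction, and later works simplify but do not trivialize it.
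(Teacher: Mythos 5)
The paper does not prove this statement at all: it is quoted as a known background theorem of H\r{a}stad, Impagliazzo, Luby and Levin \cite{hastad1999pseudorandom}, so there is no in-paper argument to compare yours against. Your outline is nonetheless an accurate roadmap of the standard proof. The ``if'' direction is essentially complete as written: with $l(n)\le n/2$ the counting bound $2^{l(n)}/2^{n}$ makes the distinguisher $D(y)=1$ iff $G(A(y))=y$ work, and the length-padding needed to fit the paper's definition of a one-way function on $\{0,1\}^n$ is indeed routine. For the ``only if'' direction, steps (1)--(3) correctly handle the one-way \emph{permutation} case, and your seed-length accounting is sound: a one-bit-stretch generator iterated to output length $n=m^{t}$ loses only a factor polynomial in $n$ in the hybrid argument, which the hardness of the underlying function (against all circuits of size polynomial in $m=n^{1/t}$, hence polynomial in $n$) absorbs. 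The one place where your proposal is a sketch rather than a proof is exactly the place you flag: removing the permutation assumption. The pseudoentropy extraction for an arbitrary, possibly badly irregular one-way function --- hashing with pairwise-independent families, the Leftover Hash Lemma, and the bridge from Shannon entropy to min-entropy without knowing the preimage structure of $f$ --- is the entire technical content of \cite{hastad1999pseudorandom} and cannot be compressed into a sentence. As justification for invoking the theorem, which is all the paper itself does, your account is adequate; as a self-contained proof it defers the hardest step to the literature.
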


Cryptographic pseudorandom number generators' main power lies in the ability to fool any
polynomial sized adversary, while running in polynomial time.
However, in other areas of complexity, such as derandomization,
the crucial issue is having a smaller seed.

\begin{definition}[Complexity PRNG]
A complexity pseudorandom number generator is a $2^{l(n)}$ time
computable function $G: \{0,1\}^{l(n)} \to \{0,1\}^{n}$, such
that for any circuit $C$ of size $n$
	\[
	\Big| \pr{C\left( U_{l(n)} \right) = 1} -
	      \pr{C\left(G\left( U_n \right)\right) = 1} \Big| < n^{-1}.
	 \]
\end{definition}

The essential difference with cryptographic pseudorandom number generators is the order
of quantifiers. A cryptographic pseudorandom number generator fools circuits of an
arbitrary polynomial size. The complexity pseudorandom number generator fools circuits
only of a fixed polynomial size but under the right assumptions
requires far fewer random bits.

Impagliazzo and
Wigderson~\cite{DBLP:journals/jcss/ImpagliazzoW01} building on
a series of paper starting with Nisan and
Wigderson~\cite{nisan1994hardness} characterize when complexity
pseudorandom number generators exist.

\begin{theorem} \label{thm:nw}
There exists $L \in \textsf{DTIME}(2^{O(l(n))})$ and
$\varepsilon > 0$ such that no circuit of size at most
$2^{\varepsilon l(n)}$ can compute $L$ if and only if there
exists a complexity pseudorandom number generator with $l(n) = C \log n$ for some $C >
0$.
\end{theorem}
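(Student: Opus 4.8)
I would prove the two implications separately. The reverse direction --- that a complexity pseudorandom number generator forces a hard language in $\textsf{DTIME}(2^{O(l(n))})$ --- is the ``easy'' half, and I would only sketch it, since it is the standard principle that non-trivial derandomization implies circuit lower bounds. Given $G:\{0,1\}^{l(n)}\to\{0,1\}^{n}$ with $l(n)=C\log n$, one argues by contraposition: if every language in $\textsf{DTIME}(2^{O(m)})$ had circuits of size $2^{\varepsilon m}$ for every $\varepsilon>0$, then $G$ would itself be computable by tiny circuits, and since the range of $G$ consists of only $n^{O(1)}$ of the $2^{n}$ strings of length $n$, one can manufacture a circuit of size at most $n$ accepting a noticeably larger fraction of $G(U_{l(n)})$ than of $U_{n}$, contradicting the definition of a complexity pseudorandom number generator. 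Repackaging this as an almost-everywhere lower bound for one fixed language is routine (and the infinitely-often versus almost-everywhere quantifiers are handled in the usual way), so I would cite this direction rather than reproduce it.

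All the substance is in the forward direction: from a language $L\in\textsf{DTIME}(2^{O(l)})$ that no circuit of size $2^{\varepsilon l}$ computes correctly on all inputs (worst-case hardness), I must build the generator, and I would do this in three stages. Stage one amplifies worst-case hardness to mild average-case hardness: replace the truth table of $L$ on $l$-bit inputs by its encoding under a locally list-decodable code --- equivalently, by a low-degree multivariate extension over a suitable finite field --- obtaining $L'$ on $O(l)$-bit inputs, still in $\textsf{DTIME}(2^{O(l)})$, such that no circuit of size $2^{\varepsilon_{1}l}$ agrees with $L'$ on more than a $(1-\delta)$ fraction of inputs for some constant $\delta>0$; otherwise the local list-decoder would convert such a circuit into one of size $2^{\varepsilon_{1}l}\cdot\mathrm{poly}(l)<2^{\varepsilon l}$ for $L$. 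Stage two amplifies mild hardness to strong hardness via the derandomized XOR Lemma of Impagliazzo and Wigderson, producing $f=L''$ on $t=O(l)$-bit inputs, in $\textsf{DTIME}(2^{O(l)})$, that no circuit of size $2^{\varepsilon_{2}l}$ computes on more than a $\tfrac12+2^{-\varepsilon_{2}l}$ fraction of inputs. Using the \emph{derandomized} XOR Lemma here rather than the naive one is essential: XORing $2^{\Omega(l)}$ independent copies would blow the input length up to $2^{\Omega(l)}$ and thereby force the eventual seed length to be polynomial rather than logarithmic.

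Stage three feeds $f$ into the Nisan--Wigderson generator. Fix a combinatorial design $S_{1},\dots,S_{n}\subseteq[\ell]$ with $|S_{i}|=t$ and $|S_{i}\cap S_{j}|\le\log n$ for $i\ne j$; such designs exist with $\ell=O(t^{2}/\log n)=O(l^{2}/\log n)$, so in the relevant regime $l=l(n)=\Theta(\log n)$ the seed length is $O(\log n)$, and $G(x)=\bigl(f(x|_{S_{1}}),\dots,f(x|_{S_{n}})\bigr)$ is computable in time $2^{O(l)}\cdot n=n^{O(1)}=2^{O(l(n))}$. For correctness, suppose some circuit $D$ of size $n$ distinguished $G(U_{\ell})$ from $U_{n}$ with advantage at least $1/n$. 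The hybrid argument underlying Yao's Theorem (Theorem~\ref{yaothm}) then gives an index $i$ and a circuit of size $O(n)$ predicting the $i$-th output coordinate from the first $i-1$ with advantage at least $1/n^{2}$. Hardwiring the coordinates of $x$ outside $S_{i}$ to their best values, each earlier coordinate $f(x|_{S_{j}})$ becomes a function of at most $|S_{i}\cap S_{j}|\le\log n$ live input bits, hence is computed by a brute-force subcircuit of size $\tilde O(n)$; the whole predictor therefore collapses to a circuit of size $\tilde O(n^{2})$ that computes $f$ on more than a $\tfrac12+1/n^{2}$ fraction of its $t$-bit inputs. Choosing the constant hidden in $l(n)=\Theta(\log n)$ large enough that $2^{\varepsilon_{2}l}\ge n^{3}>\tilde O(n^{2})$ and $1/n^{2}>2^{-\varepsilon_{2}l}$, this contradicts the strong average-case hardness of $f$, so no such $D$ exists and $G$ is a complexity pseudorandom number generator.

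The hard part, and the only place where genuinely deep tools enter, is stage two: hardness amplification with tight control over the input length, i.e.\ the derandomized XOR Lemma --- together with the list-decoding / low-degree-extension machinery of stage one that makes worst-case-to-average-case amplification go through at all. By contrast, the design-based generator, the hybrid argument, and the circuit-size accounting of stage three are comparatively mechanical.
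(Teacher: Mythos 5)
The paper does not prove Theorem~\ref{thm:nw}: it is quoted as a known characterization due to Impagliazzo and Wigderson, building on Nisan and Wigderson, so there is no in-paper argument to compare yours against. Your outline is the standard proof from those works and is essentially correct --- worst-case-to-average-case amplification via a locally list-decodable encoding (low-degree extension) of the truth table, the derandomized XOR lemma to keep the amplified function's input length at $O(l)$ so that the eventual seed stays logarithmic, and then the Nisan--Wigderson design-based generator with the hybrid/prediction argument and the hardwiring trick to collapse the predictor to a circuit of size $\tilde O(n^2)$. You also correctly identify the derandomized XOR lemma as the step that cannot be replaced by the naive one without blowing up the seed length.

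The one place your sketch glosses a genuine detail is the reverse direction. A circuit that accepts exactly the range of $G$ on all $n$ output bits must encode $n^{C}$ strings of length $n$ and so has size on the order of $n^{C+1}$, which exceeds the size-$n$ budget allowed to distinguishers in the paper's definition of a complexity pseudorandom number generator; ``the range is sparse'' alone does not yield a legal distinguisher. The standard fix is to define the hard language $L$ on inputs of length $l(n)+1$ as the set of $(l(n)+1)$-bit prefixes of strings in the range of $G$ (decidable in time $2^{O(l)}$ by enumerating seeds). Any circuit of size $2^{\varepsilon l(n)} \le n$ computing $L$ accepts every output of $G$ projected to its first $l(n)+1$ bits but at most half of the uniform distribution on those bits, giving a size-$n$ distinguisher with advantage $1/2 > 1/n$. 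With that adjustment the converse goes through, and the overall plan is sound.
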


\section{Information Theoretic Bounds} \label{sec:info}
In this section we make no computational assumptions on the
players, and show that there can be no Nash equilibrium if we
limit the amount of randomness available to both players.

\begin{lemma} \label{lemma:unbounded1}
For any $\gamma \in [0,1]$, if P2 has less than $n(1-\gamma)$
random bits, then P1 has a deterministic strategy $A$ that
achieves an expected average payoff of at least $\gamma$.
\end{lemma}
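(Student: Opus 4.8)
The plan is to exploit the fact that P2's strategy, having fewer than $n(1-\gamma)$ random bits, can take at most $2^{n(1-\gamma)}$ distinct values as a sequence in $\{H,T\}^n$. Since P1 in Matching Pennies wins a round exactly when he matches P2's action (or mismatches, depending on the convention in Figure~\ref{fig:game} — say P1 gains $+1$ when actions match), P1's goal is to predict as many of P2's bits as possible. I would set this up as a prediction/covering problem: P1 wants a deterministic sequence $a \in \{H,T\}^n$ (or more generally an adaptive deterministic strategy, but since the game is simultaneous-move and P2's randomness is fixed in advance, a fixed sequence suffices once we fix P1's target) that agrees with the realized P2-sequence in at least $\frac{n+ \gamma n}{2}$ coordinates in expectation.

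**The key steps**, in order: (i) Observe that P2's mixed strategy induces a distribution $\mathcal{D}$ supported on a set $T \subseteq \{H,T\}^n$ with $|T| \le 2^{m}$ where $m < n(1-\gamma)$. (ii) Invoke a counting/entropy bound: if a random variable $Y$ over $\{H,T\}^n$ has support of size at most $2^m$, then there is a fixed string $a$ with $\e{\#\{t : a_t = Y_t\}}$ large. The clean way is via the probabilistic method — pick $a$ uniformly at random and condition, or better, use the chain rule for entropy / a martingale argument: process the coordinates one at a time, at each step letting P1 guess the most likely next bit of $Y$ given the prefix, so that P1's expected number of correct guesses is $\sum_{t=1}^n (1 - H_b(\text{conditional}))$-type quantity, which by Fano/concavity is at least $n - H(Y) \ge n - m \ge \gamma n$ in the "extra correct guesses beyond $n/2$" sense. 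Concretely, $\e{\text{correct}} \ge \frac12(n + (n - H(Y))) = \frac12(n + n - m)$, so the average payoff is $\frac{1}{n}(\e{\text{correct}} - \e{\text{incorrect}}) = \frac{1}{n}(2\e{\text{correct}} - n) \ge \frac{n-m}{n} > \gamma$. (iii) Note this guessing strategy is deterministic and, crucially, can be made adaptive on P2's past plays even though we phrased it statically — but since Matching Pennies payoff at round $t$ depends only on round $t$'s actions, and P2's action at round $t$ is a function of P2's fixed seed, observing past rounds gives P1 information about the seed; the martingale argument handles this automatically by conditioning on the observed prefix of $Y$.

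**The main obstacle** I anticipate is making the per-round greedy prediction bound tight, i.e., getting exactly the constant $\gamma$ rather than something weaker. The naive bound "predict the mode of each conditional distribution" gives $\pr{\text{correct at } t \mid \text{prefix}} \ge 1 - H_b(p_t)$ only when $p_t \le 1/2$, and one must relate $\sum_t H_b(p_t)$ to $H(Y) \le m$ via the chain rule $H(Y) = \sum_t H(Y_t \mid Y_{<t})$ together with the bound $\pr{\text{guess } Y_t \text{ correctly} \mid Y_{<t}} \ge 1 - \tfrac12 H(Y_t \mid Y_{<t})$ (since for a bit, $\max(p,1-p) \ge 1 - \tfrac12 H_b(p)$). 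Summing and taking expectations over the prefix gives $\e{\text{correct}} \ge n - \tfrac12 H(Y) \ge n - \tfrac{m}{2}$, whence the average payoff is $\ge \frac{1}{n}(2(n-\tfrac{m}{2}) - n) = \frac{n-m}{n} \ge \gamma$. I would also need to double-check the edge case $\gamma = 1$ (P2 has $0$ bits, so P2 is deterministic and P1 predicts perfectly, payoff exactly $1$) and $\gamma = 0$ (vacuous), and confirm that "less than $n(1-\gamma)$ bits" versus "$\le$" lands the inequality on the correct side.
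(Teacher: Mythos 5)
Your proposal is correct and is essentially the paper's own argument: the greedy ``predict the conditional mode of P2's next action given the observed prefix'' strategy, with the per-round gain controlled by $\max(p,1-p)\ge 1-\tfrac{1}{2}H_b(p)$, which after rearranging is exactly the paper's key inequality $2p^t-1-(1-p^t)\log(1-p^t)-p^t\log p^t\ge 1$ for $p^t\in[1/2,1]$; the paper does the bookkeeping with a potential function tracking $\log|S^t|$ (the log-size of the set of seeds consistent with the prefix) where you invoke the entropy chain rule $H(Y)=\sum_t H(Y_t\mid Y_{<t})\le m$, but since the conditional distribution of the seed is uniform on $S^t$, these are the same accounting. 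One caveat: your parenthetical claim that a \emph{fixed} non-adaptive sequence suffices is false --- if P2 uses a single random bit to choose between the all-$H$ and all-$T$ sequences, every fixed sequence for P1 has expected payoff $0$, far below the required $\gamma=1-2/n$ --- but your actual argument (steps ii--iii) is the adaptive one, which you correctly identify as what the conditioning on the observed prefix provides.
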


\begin{proof}
We will give a strategy $A = (a^1, \cdots, a^n)$ for P1 that
achieves a high payoff against any strategy $B = (b^1, \cdots,
b^n)$ from P2.

Player 1 will enumerate all of P2's possible coin flips, and
will obtain a set of $2^{n(1-\gamma)}$ possible strategies, one
of which is the one being used by P2. After each play by P2, P1
can eliminate all the strategies that do not play that action
at that round. Let $S^t$ be the set of strategies that are
consistent with P2's plays up to round $t$. Initially the set
$S^1$ contains $2^{n(1 - \gamma)}$ strategies, and, for all
$t$, $|S^{t+1}| \le |S^t|$.

The strategy $A$ for P1 is straightforward: at round $t$, P1
will play based on the most likely event: he will consider all
strategies in $S^t$ and play H if the majority of strategies in
$S^t$ use H at round $t$ and play T otherwise. Let $p^t$ be the
exact fraction of strategies that are the majority at round
$t$,
	\begin{equation} \label{eq:pt}
	p^t = \frac{\max\{ |\{b^t ~|~ b^t = H\}|,|\{b^t ~|~ b^t = T\}| \}}{|S^t|},
	\end{equation}
so that $p^t \in [1/2, 1]$. P1's expected payoff at round $t$
is:
	\[
	\e{h(a^t, b^t)} = p^t - (1 - p^t) = 2 p^t - 1 \ge 0.
	\]
Thus P1's average expected payoff is at least 0. To show that
P1 can actually achieve an average expected payoff of $\gamma$
we need to consider the amount of information P1 gains at each
round. We define the following potential function $\phi: \{1,
\dots, n\} \to \mathbb{R}$:
	\[
	\phi(t) = \sum_{k = 1}^{t-1} h(a^k, b^k) - \log |S^t|,
	\]
which considers both the accumulated payoff for P1 and the
log-size of the set of consistent strategies. At time $t = 1$
there are $2^{n(1 - \gamma)}$ possible strategies for P2, so
$\phi(1) = -n(1 - \gamma)$. We will now lower bound the
expected increase in $\phi$ at each round. We can express this
as
	\begin{align*}
	\e{\phi(t+1) - \phi(t)} &= \left( \e{ \sum_{k = 1}^{t} h(a^k, b^k)} - \e{\sum_{k = 1}^{t-1} h(a^k, b^k)}\right) \\
	& \qquad \qquad- \left( \e{ \log |S^{t+1}| } - \e{\log|S^t|} \right) \\
	&= \e{h(a^t, b^t)} - \left( \e{\log |S^{t+1}| - \log|S^t|} \right) \\
	&= 2p^t - 1 - \left( \e{\log |S^{t+1}| - \log|S^t|} \right).
	\end{align*}
Now consider $\e{\log|S^{t+1}|}$. When P1 looses he can
eliminate a $p^t$ fraction of strategies, thus the new set
$S^{t+1}$ will contain a $(1 - p^t)$ fraction of the strategies
in $S^t$. On the other hand, when P1 wins, $|S^{t+1}| =
p^t|S^t|$. To complete the analysis we have to consider two
cases, since if $p^t = 1$ then $\e{\log|S^{t+1}|}$ is not well
defined.

First assume $p^t = 1$. This happens when all feasible
strategies for P2 have the same action at round $t$. In this
case P1 will win with probability 1, and the size of the set of
feasible strategies will stay the same. So, overall, the
increase in $\phi$ will be 1.

Now assume $p^t \in [1/2, 1)$. Then, the expected size of the
set $S^{t+1}$ is:
	\begin{equation} \label{eq:st+1}
	\e{\log |S^{t+1}|} = \left(  (1 - p^t) \log (1 - p^t) |S^t| + p^t \log p^t |S^t| \right).
	\end{equation}
The expected change in $\log |S^t|$ does not depend on $S^t$,
but only on $p^t$, since
	\begin{align*}
	\e{\log |S^{t+1}| - \log |S^{t}|} &= \left[ (1 - p^t) \log (1 - p^t) |S^t| \right. \\
		& \left. \quad \qquad + p^t \log p^t|S^t| \right] - \log |S^t| \\
	&= (1 - p^t) \log (1-p^t) +  p^t \log p^t.
	\end{align*}
So that the overall change in potential when $p^t < 1$ is
	\[
	\e{\phi(t+1) - \phi(t)} \ge
	2p^t - 1 - ( 1 - p^t) \log (1-p^t) -  p^t \log p^t,
	\]
which is always at least $1$ for $p^t \ge 1/2$.

So, for all $p^t \in [1/2, 1]$, at each step the potential
function $\varphi$ increases by at least $1$. Thus, after $n$
rounds we have that
	\begin{align*}
	\e{\phi(n)} &\ge \phi(1) + \min_t \{ n \e{ \phi(t+1) - \phi(t) } \} \\
	&\ge -n(1 - \gamma) + n = n\gamma.
	\end{align*}
Since P1's expected payoff is at least $\e{\phi(n)}$, this
completes the proof.
\end{proof}

This result immediately implies that, without any computational
assumption, there can be no equilibrium with less than $n$
random coins.

\begin{corollary}
For all $\gamma_1, \gamma_2 \in [0,1]$ such that $\gamma_1 +
\gamma_2 > 0$, if P1 and P2 have, respectively, $n(1 -
\gamma_1)$ and $n(1 - \gamma_2)$ random coins, then there can
be no Nash equilibrium in the $n$-stage Matching Pennies
repeated game.
\end{corollary}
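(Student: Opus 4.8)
The plan is to argue by contradiction: assume $(R_1,R_2)$ is a Nash equilibrium with P1 restricted to $n(1-\gamma_1)$ coins and P2 to $n(1-\gamma_2)$ coins. By the symmetry of Matching Pennies under exchanging the two players, Lemma~\ref{lemma:unbounded1} also holds with the roles of P1 and P2 interchanged; so after possibly relabelling the players we may assume $\gamma_2 \ge \gamma_1$, and then $\gamma_1+\gamma_2>0$ forces $\gamma_2>0$. Fix any $\gamma$ with $0<\gamma<\gamma_2$ (for instance $\gamma=\gamma_2/2$).

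First I bound P1's equilibrium payoff from below. The equilibrium strategy $R_2$ uses at most $n(1-\gamma_2)<n(1-\gamma)$ random coins, so Lemma~\ref{lemma:unbounded1} supplies a deterministic strategy $A$ for P1 with $\e{U(A,R_2)}\ge\gamma$. The Nash condition for P1, applied to the deviation $A$, then gives $\e{U(R_1,R_2)}\ge\e{U(A,R_2)}\ge\gamma>0$.

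Next I bound the same quantity from above, using nothing about P2's coin budget. Let $\widetilde{R}_2\in\Delta\{\text{H,T}\}^n$ be the fully uniform strategy, which at every round plays H or T by an independent fair coin; this is an admissible deviation since the Nash equilibrium definition quantifies over \emph{all} $R_2'\in\Delta\{\text{H,T}\}^n$. For any fixed strategy $R_1$, conditioning on the history through round $t-1$, P2's action $b^t$ is uniform and independent of P1's action $a^t$, so $\e{h(a^t,b^t)}=0$; summing over the $n$ rounds yields $\e{U(R_1,\widetilde{R}_2)}=0$. The Nash condition for P2 --- equivalently, $R_2$ is a best response, so it minimizes P1's payoff among P2's options --- then gives $\e{U(R_1,R_2)}\le\e{U(R_1,\widetilde{R}_2)}=0$, contradicting the lower bound above.

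The argument is short, and the only steps demanding care are: invoking the player-swapped version of Lemma~\ref{lemma:unbounded1}, which is justified because the Matching Pennies payoff matrix is symmetric under exchanging (and relabelling) the two players; choosing $\gamma$ \emph{strictly} below $\gamma_2$ so that the strict coin-count hypothesis of the Lemma is met; and observing that the equilibrium notion used here allows a deviator unbounded randomness, so that $\widetilde{R}_2$ is legal even when P2's budget is smaller than $n$. None of these is a genuine obstacle --- the corollary is really just Lemma~\ref{lemma:unbounded1} combined with the elementary fact that the uniform strategy secures value $0$ in Matching Pennies.
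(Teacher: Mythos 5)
Your proof is correct, but it takes a somewhat different route from the paper's. The paper applies Lemma~\ref{lemma:unbounded1} symmetrically to \emph{both} players: the two Nash conditions force $\e{U(S_1,S_2)} \ge \gamma_2$ and $-\e{U(S_1,S_2)} \ge \gamma_1$, and summing gives $\gamma_1+\gamma_2 \le 0$, a contradiction. You instead invoke the lemma only on the side of the larger $\gamma_i$ to obtain a strictly positive lower bound on P1's equilibrium payoff, and you cap that payoff at $0$ by letting P2 deviate to the fully uniform strategy. Both arguments are sound and essentially equally short; the paper's is more symmetric and never needs a deviation that itself consumes randomness. That last point is the one place where your write-up deserves a caveat: the uniform deviation $\widetilde{R}_2$ requires $n$ fair coins, which P2 does not have, and while the literal text of the Nash equilibrium definition quantifies over all of $\Delta\{\text{H,T}\}^n$, in a paper whose entire premise is that players' randomness is limited it is safer not to lean on a deviation the deviator cannot implement. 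The fix is immediate: $\e{U(R_1,\widetilde{R}_2)}$ is an average over deterministic strategies $b$ of $\e{U(R_1,b)}$, so some deterministic (zero-coin) $b^*$ satisfies $\e{U(R_1,b^*)} \le 0$, and using $b^*$ in place of $\widetilde{R}_2$ yields the same upper bound. On the other hand, your care in applying Lemma~\ref{lemma:unbounded1} with $\gamma$ strictly below $\gamma_2$, so that the ``less than'' hypothesis is literally met, and your explicit appeal to the player-swap symmetry of the game are, if anything, more scrupulous than the paper's own invocation of the lemma.
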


\begin{proof}
Assume, by contradiction, that $(S_1, S_2)$ is such a Nash
equilibrium. By Lemma \ref{lemma:unbounded1}, P1's expected
payoff must be $\e{U(S_1, S_2)} \ge \gamma_2$ and P2's payoff
$-\e{U(S_1, S_2)} \ge \gamma_1$, otherwise they would be better
off by using the majority strategy in the proof of Lemma
\ref{lemma:unbounded1}. Summing the two inequalities we get
$\gamma_1 + \gamma_2 \le 0$, a contradiction, since we assume
$\gamma_1 + \gamma_2 > 0$.
\end{proof}

However, if we limit the amount of randomness available to both
players, we are still able to achieve an $\varepsilon$-Nash
equilibrium. Furthermore, if the game has a $\varepsilon$-Nash
equilibrium, then both players must have at least $(1 -
\varepsilon)n$ random coins.


\begin{theorem} \label{lemma:delta-eq-tight}
Let $\gamma \in [0,1]$. The game has a $\gamma$-Nash
equilibrium if and only if both players have $n(1 - \gamma)$
random coins.
\end{theorem}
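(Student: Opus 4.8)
The plan is to prove the two implications separately: the ``if'' direction by an explicit construction, and the ``only if'' direction by a short deduction from Lemma~\ref{lemma:unbounded1}.

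For the ``if'' direction, assume both players have $n(1-\gamma)$ coins (for cleanliness take $n(1-\gamma)$ and $\gamma n$ to be integers with $\gamma n$ even; the odd case only costs an additive $O(1/n)$ in the approximation parameter). I would single out $n(1-\gamma)$ of the rounds as \emph{random} and call the remaining $\gamma n$ rounds \emph{fixed}. Each player's equilibrium strategy $R_i$ plays a fresh independent fair coin on every random round, using up all $n(1-\gamma)$ of its coins, and plays a fixed deterministic string on the fixed rounds, the two strings being chosen so that P1's action agrees with P2's on exactly half of the fixed rounds (for instance P1 always plays H there, while P2 plays H on the first $\gamma n/2$ fixed rounds and T on the rest). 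Then $\e{U(R_1,R_2)}=0$, since each random round contributes $0$ (one of the two actions there is a fair coin independent of the other) and the fixed rounds cancel. To verify the $\gamma$-equilibrium conditions, fix any alternative strategy $R_1'$ for P1: against $R_2$ it still earns expected payoff $0$ on every random round, because P2's coin there is independent of P1's whole history and of P1's choices, so it can gain only on the $\gamma n$ fixed rounds and by at most $1$ per round; hence $\sup_{R_1'}\e{U(R_1',R_2)}\le\gamma$ and so $\e{U(R_1,R_2)}=0\ge\sup_{R_1'}\e{U(R_1',R_2)}-\gamma$. The argument for P2 is symmetric, so $(R_1,R_2)$ is a $\gamma$-Nash equilibrium.

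For the ``only if'' direction, suppose $(R_1,R_2)$ is a $\gamma$-Nash equilibrium in which P1 has $m_1$ coins and P2 has $m_2$ coins, and suppose toward a contradiction that both $m_1<n(1-\gamma)$ and $m_2<n(1-\gamma)$. Then one can pick $\gamma_1,\gamma_2$ with $\gamma<\gamma_1<1-m_1/n$ and $\gamma<\gamma_2<1-m_2/n$, so that $m_1<n(1-\gamma_1)$ and $m_2<n(1-\gamma_2)$. Applying Lemma~\ref{lemma:unbounded1} with the roles of the two players exchanged --- which is legitimate because Matching Pennies is symmetric under swapping the players and relabelling one player's two actions, so the lemma's potential-function argument runs verbatim with P2 running the majority strategy --- the bound $m_1<n(1-\gamma_1)$ gives P2 a strategy with $\e{U(R_2',R_1)}\ge\gamma_1$, hence $\sup_{R_2'}\e{U(R_2',R_1)}\ge\gamma_1$; symmetrically $\sup_{R_1'}\e{U(R_1',R_2)}\ge\gamma_2$. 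Substituting both bounds into the two $\gamma$-Nash conditions, adding them, and using $\e{U(R_2,R_1)}=-\e{U(R_1,R_2)}$ gives $0\ge\gamma_1+\gamma_2-2\gamma>0$, a contradiction. Hence at least one player must have at least $n(1-\gamma)$ coins; in particular, if both players are held to a common budget $m$ then $m\ge n(1-\gamma)$, and combined with the construction this shows $n(1-\gamma)$ is exactly the threshold. (The same computation without any symmetry restriction shows more generally that $m_1+m_2\ge 2n(1-\gamma)$.)

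I expect the main obstacle to be making the construction in the ``if'' direction tight, i.e.\ forcing $\e{U(R_1,R_2)}$ to equal $0$ exactly rather than merely approximately, since this is what pins down the deterministic part of the two strategies and is the source of the parity caveat above. In the ``only if'' direction the two points needing care are that Lemma~\ref{lemma:unbounded1} as stated only lets P1 exploit P2 --- so one must invoke the symmetry of the game to also let P2 exploit P1 --- and that the lemma must be invoked with a parameter strictly larger than $\gamma$, which is legitimate precisely because the hypotheses $m_i<n(1-\gamma)$ are strict inequalities.
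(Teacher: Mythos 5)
Your proposal is correct and follows essentially the same route as the paper: the ``if'' direction uses the same construction (fair coins on $n(1-\gamma)$ rounds, complementary deterministic strings with zero net payoff on the remaining $\gamma n$ rounds, so any deviation gains at most $\gamma$), and the ``only if'' direction is the same contradiction obtained by applying Lemma~\ref{lemma:unbounded1} to each player with a parameter strictly above $\gamma$ and summing the two $\gamma$-Nash inequalities. Your explicit handling of the player-swap symmetry and of the fact that the argument really only rules out \emph{both} players being short of coins is slightly more careful than the paper's write-up, but the substance is identical.
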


For simplicity we assume $\gamma$ is the same for both players, however a similar result holds even in the case where the two players have a different amount of random coins.

\begin{proof}
To show the ``only-if'' implication, consider, by way of
contradiction, a game that has a $\gamma$-Nash equilibrium
$(S_1, S_2)$ but in which both players have less than $n(1 -
\gamma)$ random bits. Thus there must be a $\gamma' > \gamma$
such that they have exactly $n(1 - \gamma')$ random bits. Since
$(S_1, S_2)$ is a $\gamma$-Nash equilibrium, it must be the
case that, for any strategy $S_1'$ for P1
	\[
	\e{U(S_1, S_2)} \ge \e{U(S_1', S_2)} - \gamma.
	\]
By Lemma \ref{lemma:unbounded1} we know that both players have
a strategy that achieves a payoff of at least $\gamma' >
\gamma$, so that the above implies $\e{U(S_1, S_2)} > 0$.
Applying the same argument to P2, we get $-\e{U(S_1, S_2)} >
0$, a contradiction.

The ``if'' part follows from Lemma~\ref{lemma:delta-eq} below.
\end{proof}

\begin{lemma} \label{lemma:delta-eq}
Let $\gamma \in [0,1]$. If both players have $n(1 - \gamma)$
random coins, then the game has a $\gamma$-Nash equilibrium.
\end{lemma}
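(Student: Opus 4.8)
The plan is constructive: write $m := n(1-\gamma)$ (a non-negative integer — otherwise the hypothesis ``$n(1-\gamma)$ random coins'' is meaningless) and $d := n-m = \gamma n$, exhibit an explicit pair of strategies $(S_1,S_2)$ each using exactly $m$ coins, and verify the $\gamma$-Nash condition by directly computing the value of $(S_1,S_2)$ together with the two best-response values against it. The point is that the construction is essentially forced: since against an opponent with only $n(1-\gamma)$ coins a player can always guarantee $\gamma$ (Lemma~\ref{lemma:unbounded1}), a $\gamma$-equilibrium must have value exactly $0$ and best-response value exactly $\gamma$ for each player.

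The strategies split the $n$ rounds into a \emph{random block} $R$ — the first $m$ rounds — and a \emph{fixed block} consisting of the remaining $d$ rounds, itself cut into two halves $D^+$ and $D^-$ with $|D^+|=|D^-|=d/2$. On every round of $R$ both players play a fresh fair coin, thereby spending all $m$ of their coins; on the fixed block both play deterministically, with their fixed moves arranged so that P1 wins every round of $D^+$ and P2 wins every round of $D^-$ (concretely, on $D^+$ their fixed moves coincide and on $D^-$ they differ). When $d$ is odd the exact halving is impossible; in that case I would offset P1's and P2's fixed blocks by one round, so that the block on which \emph{both} play deterministically has even size while the two remaining ``leftover'' rounds each pit a deterministic move against a fresh fair coin and so contribute $0$ to every payoff — nothing else in the analysis changes.

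Then I would carry out three computations. (i) Value: on each round of $R$ the two coins are independent and fair, contributing $0$ in expectation, and on the fixed block the $d/2$ rounds P1 wins and the $d/2$ rounds P2 wins cancel; hence $\e{U(S_1,S_2)}=0$. (ii) P1's best reply to $S_2$: on any round of $R$, $S_2$'s move is a fresh fair coin, independent of the whole history and in particular of whatever P1 plays that round, so P1's expected per-round payoff there is $0$ no matter how adaptively P1 plays, and by linearity P1 gains nothing on $R$; on each of the $d$ fixed rounds $S_2$'s move is a known constant, which P1's best reply beats for $+1$. Thus $\max_{S_1'}\e{U(S_1',S_2)} = d/n = \gamma$. (iii) Symmetrically, against $S_1$ player P2 can force only $0$ on $R$ and beats P1's known constant move on all $d$ fixed rounds, so $\max_{S_2'}(-\e{U(S_1,S_2')}) = \gamma$. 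Combining, P1's gain from deviating is $\gamma - 0 = \gamma$ and P2's is $\gamma - 0 = \gamma$, so $(S_1,S_2)$ is a $\gamma$-Nash equilibrium, and both players used exactly $n(1-\gamma)$ coins.

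The only point needing care is step (ii)'s claim that adaptivity buys nothing on the random block: this is precisely the observation that on a round of $R$ the opponent's bit is drawn after, and independently of, everything the deviator has seen and played, so that round is for the deviator a single independent fair coin flip and contributes $0$ in expectation regardless of strategy. The rest — the parity of $d$, and the degenerate cases $\gamma=0$ (the unique full-randomness Nash equilibrium) and $\gamma=1$ (trivial, since $|U|\le 1$) — is routine bookkeeping.
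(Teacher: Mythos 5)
Your construction is essentially the paper's: play fresh fair coins for the first $n(1-\gamma)$ rounds, then fix deterministic moves on the remaining $\gamma n$ rounds arranged so the wins cancel (the paper uses the concrete pattern P1 $=H,H,\dots$ versus P2 $=H,T,H,T,\dots$, which is exactly your $D^+/D^-$ split), and the verification — value $0$, best deviation gains exactly $\gamma$ from the fixed block, nothing to gain on the random block — matches the paper's argument. Your write-up is somewhat more careful than the paper's (explicit handling of odd $\gamma n$, which the paper avoids by assuming $\gamma n$ even, and an explicit justification that adaptive deviations gain nothing against fresh independent coins), but the approach is the same.
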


For simplicity we assume $\gamma n$ is even.

\begin{proof}
Consider the following strategies: both player use their random
coins to play uniformly at random for the first $n(1 - \gamma)$
rounds. Thereafter P1 will always play $H$, while P2 will
alternate between $H$ and $T$, playing $H, T, \dots$. We claim
that this is a $\gamma$-Nash equilibrium.

First notice that no player can improve his payoff in the first
$n(1 - \gamma)$ rounds, given his opponent's strategy. Let's
consider the remaining $n\gamma$ rounds. P1 could improve his
payoff by playing $H,T,H,T\dots$, however this only increases
his payoff by $\gamma$. This holds also for P2, that could play
$T, T, \dots$, however gaining only $\gamma$.
\end{proof}

\section{Computationally Efficient Players} \label{sec:eff}
The proof of Lemma \ref{lemma:unbounded1} in the previous
section relies heavily the fact that we make no computational
assumptions. In particular, to implement the majority strategy
and compute $p^t$ in \eqref{eq:pt} requires solving $\#$\textsf{P} hard
problems, by reduction from $\#$\textsf{SAT}. If we restrict the players to run in time polynomial
in $n$ this particular strategy likely becomes unfeasible. In
this setting, under reasonable complexity assumptions, it is
possible to greatly reduce the amount of randomness and, at the
same time, achieve a $\varepsilon$-Nash equilibrium.

We consider players' whose actions are polynomial size Boolean
circuits. A strategy is thus a circuit family $\{C_i\}_{i \in
\mathbb{N}}$, such that circuit $C_{l(n)}$ takes as input
$l(n)$ random coins and outputs the $n$ actions to be played.
Notice that this definition implies that each agent can
simulate any of his opponent's strategies.

We consider equilibria that use $n^{\delta}$ random coins for
any $\delta >0$. Theorem \ref{thm:nash-iff-one-way} shows that
such $\varepsilon$-Nash equilibria exist if and only if one-way
functions exist.

\begin{theorem} \label{thm:nash-iff-one-way}
If players are bound to run in time polynomial in $n$, then, for all $\delta > 0$ and sufficiently large $n$,
$\varepsilon$-Nash equilibria that use only $n^\delta$ random
coins exist, where $\varepsilon = n^{-k}$ for all $k > 0$ and
sufficiently large $n$'s, if and only if one-way functions
exist.
\end{theorem}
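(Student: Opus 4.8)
The plan is to prove both directions separately, using the pseudorandomness machinery from Section~\ref{sec:background}.

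For the \emph{``if''} direction, assume one-way functions exist. By Theorem~\ref{hillthm}, for every $\delta > 0$ there is a cryptographic PRNG $G:\{0,1\}^{n^\delta}\to\{0,1\}^n$. The proposed equilibrium strategies are: each player feeds $n^\delta$ truly random coins into $G$ and plays the resulting $n$-bit string, one bit per round. I would argue this is an $n^{-k}$-Nash equilibrium for every $k$ as follows. Fix P1's strategy (the PRNG output) and consider any polynomial-time deviation $C$ for P2. Since P1 plays bit $y_t$ at round $t$, P2's round-$t$ payoff is positive only if P2 correctly anticipates $y_t$; over $n$ rounds P2's expected cumulative payoff is essentially $\sum_t (2q_t - 1)$ where $q_t$ is the probability P2's circuit outputs $y_t$ at round $t$. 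By Yao's Theorem (Theorem~\ref{yaothm}), $G$ is unpredictable, so each $q_t \le \tfrac12 + n^{-c}$; summing gives expected cumulative payoff at most $n\cdot n^{-c}$, hence expected \emph{average} payoff at most $n^{-c}$ for every $c$. Since against truly uniform play P1's average payoff is $0$, no polynomial-time deviation gains more than $n^{-k}$. The symmetric argument applies to P2. One subtlety to handle carefully: P2's deviating circuit at round $t$ may depend on P1's \emph{actual} plays $y_1,\dots,y_{t-1}$ (the game is adaptive), but this is exactly the ``next-bit'' setting of the Unpredictable definition, so no problem arises; I would spell this reduction out.

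For the \emph{``only-if''} direction, assume $\varepsilon$-Nash equilibria of the stated kind exist; I must construct a one-way function, or equivalently (by Theorem~\ref{hillthm}) a PRNG with seed length $n^\delta$. Given the equilibrium $(S_1,S_2)$, let $G$ be the polynomial-size circuit computing P1's strategy: it maps $n^\delta$ random coins to the $n$-bit play sequence. I claim $G$ is a PRNG. By Yao's Theorem it suffices to show $G$ is unpredictable. Suppose not: some polynomial-size circuit family $D$ predicts the next output bit with probability $\tfrac12 + n^{-c}$ for some $c$ and infinitely many $n$. Then P2 can use $D$ as a deviation: at each round, run $D$ on the prefix of P1's plays so far and match the prediction. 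This deviation gives P2 expected cumulative payoff roughly $n\cdot(2\cdot n^{-c})$, hence expected average payoff $\sim 2n^{-c}$; since in the equilibrium P2's average payoff is at most $0$ against $S_1$ (by the same summation bound, equilibrium play can't net P1 a negative average), and $2n^{-c} > \varepsilon = n^{-k}$ for suitable $k$ and large $n$, this contradicts the $\varepsilon$-Nash property. Hence $G$ is unpredictable, so a PRNG, so one-way functions exist. The role-symmetry means I can equally extract the generator from either player's strategy.

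The main obstacle I anticipate is bookkeeping around two points. First, the precise relationship between ``the PRNG fools all poly-size distinguishers'' (a distinguisher statement) and ``no poly-size deviation in the game gains more than $\varepsilon$'' (a payoff statement): the cleanest bridge is unpredictability rather than indistinguishability, since a deviating player is literally trying to predict the opponent's next bit, and Yao's Theorem lets me pass freely between the two notions. Second, ensuring the quantifier order is right --- the theorem asserts $\varepsilon = n^{-k}$ works for \emph{all} $k$, which matches the ``for all $c$'' in the PRNG/unpredictability definitions, so the translation is quantifier-for-quantifier, but I should state it carefully so that the equilibrium existing for all $k$ is used in the hard direction and delivered in the easy direction. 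Neither point is deep, but both require care to avoid an off-by-a-polynomial error in the payoff accounting.
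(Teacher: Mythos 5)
Your overall decomposition matches the paper's: the ``if'' direction builds the equilibrium from PRNGs obtained via Theorem~\ref{hillthm}, and the ``only-if'' direction extracts a generator from an equilibrium strategy via Yao's Theorem. Your ``if'' direction takes a slightly different technical route from the paper's Lemma~\ref{lemma}: you bound the payoff round-by-round using unpredictability, whereas the paper builds a single distinguisher by locating one round $i$ with bias at least $n^{-k}/1$ of the total and testing whether P1 wins that round. Both work; your version needs Yao's Theorem even in the easy direction (to pass from ``PRNG'' to ``unpredictable''), while the paper's hybrid-style argument uses indistinguishability directly. That difference is cosmetic.

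The ``only-if'' direction, however, has a genuine gap. You take P1's strategy circuit as the candidate PRNG and derive a contradiction from a predictor for it, but this requires knowing that P2's equilibrium payoff $-w$ (where $w=\e{U(S_1,S_2)}$) is at most $0$, so that a deviation earning $2n^{-c}$ beats it by more than $\varepsilon$. Your justification --- ``by the same summation bound, equilibrium play can't net P1 a negative average'' --- is circular: that summation bound holds only when the \emph{opponent's} strategy is unpredictable, which is exactly what you are trying to establish, and in this randomness-limited setting the value of the restricted game need not be $0$ (P1 may simply lack any good deviation, which is the whole point of Section~\ref{sec:info}). If $w<0$, a predictor for P1's strategy gives P2 no contradiction at all. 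The paper's Lemma~\ref{nash-to-one-way} closes this by arguing in the contrapositive: if one-way functions do not exist then \emph{both} players' strategies are predictable (Yao), and then a case split on the sign of $w$ shows that whichever player currently has non-positive payoff can profitably deviate using a predictor for the other's strategy. Your closing remark about ``role-symmetry'' gestures at this, but the argument as written commits to P1's strategy and to $w\ge 0$; the correct conclusion is only that \emph{at least one} of the two strategies must be a PRNG (which still suffices, since any PRNG yields a one-way function), and reaching it requires the two-sided case analysis you omit.
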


\begin{proof}
The if part is Lemma \ref{one-way-to-nash}, while the only-if
part is Lemma \ref{nash-to-one-way}.
\end{proof}

As a preliminary result we show that, in our setting, the
expected utility when at least one player uses a pseudorandom number generator can't be
too far from the expected utility when playing uniformly at
random.

\begin{lemma} \label{lemma}
Assume one-way functions exist, and let $G$ be the strategy
corresponding to the output of a pseudorandom number generator. For any strategy $S$ that runs in time polynomial in $n$,
for all $k > 0$ and sufficiently large $n$,
	\[
	\Big|\e{U(G, S)} \Big| \le n^{-k} \text{ and } \Big|\e{U(S, G)} \Big| \le n^{-k}.
	\]
\end{lemma}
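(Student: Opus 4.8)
The plan is to reduce the statement to the computational indistinguishability of the generator's output from the uniform distribution. The base case is the following exact fact: if P1 plays $n$ independent uniform bits, then at each round $t$ the bit $a^t$ is uniform and independent of P2's bit $b^t$ --- P2's $t$-th action may depend on P1's actions and coins from rounds $1, \dots, t-1$, but conditioning on all of that leaves $a^t$ uniform --- so $\e{h(a^t, b^t)} = \tfrac12 - \tfrac12 = 0$ and hence $\e{U(U_n, S)} = 0$. The mirror-image argument gives $\e{U(S, U_n)} = 0$. So it suffices to bound by $n^{-k}$ the amount by which the expected average payoff moves when one player's uniform bits are replaced by the pseudorandom string $G(U_{l(n)})$, where $G$ is a cryptographic PRNG; such generators exist under the one-way function assumption by Theorem~\ref{hillthm}.

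Second, I would build a distinguisher from any violation of the bound. Fix $k$ and suppose $\e{U(G, S)} > n^{-k}$ for infinitely many $n$ (the case $\e{U(G,S)} < -n^{-k}$, and the two analogous cases for $\e{U(S,G)}$, are handled identically after swapping the roles of the players and using the matching base case). Let $V(z, r) \in \{-1, -1 + \tfrac2n, \dots, 1\}$ be P1's average payoff when P1 plays the sequence $z \in \{0,1\}^n$ and P2 plays $S$ with coins frozen to $r$; since $S$ runs in polynomial time, $V$ is polynomial-time computable, the game being simulated round by round from the bits of $z$. Then $\mathbf{E}_{z \sim G(U_{l(n)}),\, r}[V(z,r)] - \mathbf{E}_{z \sim U_n,\, r}[V(z,r)] > n^{-k}$, so by averaging there is a fixed $r^\ast$ and, among the $O(n)$ relevant thresholds, a fixed $\theta^\ast$ such that the circuit $D_n(z) := [\,V(z, r^\ast) \ge \theta^\ast\,]$ satisfies $\pr{D_n(G(U_{l(n)})) = 1} - \pr{D_n(U_n) = 1} > n^{-k}/2$. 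This $D_n$ has polynomial size: it simulates $S$ (polynomial size, possibly carrying a hardwired description of $G$, which is allowed) and compares the output to two hardwired constants, using $\textsf{P} \subseteq \textsf{P/Poly}$. An inverse-polynomial advantage for infinitely many $n$ contradicts the indistinguishability of $G(U_{l(n)})$ from $U_n$ (apply the definition with $c = k+1$). Hence for every $k > 0$ and all sufficiently large $n$, $|\e{U(G,S)}| \le n^{-k}$, and symmetrically $|\e{U(S,G)}| \le n^{-k}$.

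The main thing to be careful about is that $S$ may be both randomized and history-dependent, so $\e{U(G,S)}$ is an expectation over P1's seed and P2's coins against an adaptive opponent. Fixing $r^\ast$ by averaging removes the randomization, and turning the real-valued payoff gap into a Boolean distinguishing advantage is the standard ``random threshold, then derandomize nonuniformly'' trick (equivalently, output $1$ with probability $(1+V)/2$ and then fix the threshold by an averaging argument), which costs at most a factor of two. It is also worth stating explicitly that the distinguisher needs nothing but the full action string $z$ in order to simulate the adaptive opponent, and that there is no circularity in $S$ ``knowing'' $G$: a cryptographic PRNG fools every polynomial-size circuit, including circuits assembled from its own description.
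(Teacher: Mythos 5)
Your proof is correct, and at the top level it is the same reduction the paper uses: a noticeable deviation of $\e{U(G,S)}$ from its value under truly uniform play yields a polynomial-size distinguisher for $G$, contradicting indistinguishability (equivalently, the nonexistence of such a distinguisher is guaranteed once PRNGs exist, via Theorem~\ref{hillthm}). The two arguments differ in how the distinguisher is built. The paper decomposes $nU = \sum_i A_i$ into per-round payoffs, finds by averaging a single round $i$ with $\e{A_i} > n^{-k}$, and lets the test output $1$ iff P1 wins round $i$; since a round's outcome is already Boolean, no further work is needed and $\pr{T(U_n)=1}=1/2$ immediately. You instead keep the full average payoff $V(z,r)$ and convert the real-valued gap into a Boolean advantage by the threshold-plus-nonuniform-averaging trick, also fixing P2's coins $r^\ast$ nonuniformly. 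Both lose only a constant factor in the advantage. Your version is somewhat more careful on points the paper elides: you prove the base case $\e{U(U_n,S)}=0$ round by round (the paper just asserts $\pr{T(U_n)=1}=1/2$), you handle the sign cases of the absolute value explicitly (the paper silently assumes $\e{U(G,S)}>n^{-k}$), and you address adaptivity and the randomness of $S$ by hardwiring $r^\ast$. The paper's round-isolation is slightly more economical because it avoids the threshold step; your argument generalizes more readily to settings where the per-round payoff is not Boolean (indeed, the paper's Lemma~\ref{lemma:discounted} is a discounted variant where a per-round decomposition is again used). No gaps.
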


\begin{proof}
We prove only the first inequality, the proof for the second
one being symmetric.

Proof by contradiction. Assuming there is a $k > 0$ such that
$\Big|\e{U(G, S)}\Big|> n^{-k}$ for infinitely many $n$'s, we
will construct a test $T$ for $G$, and show that
	\begin{equation}
	\Big|
		\pr{T\left( G\left( U_{l(n)} \right) \right) = 1}   -
		\pr{T\left( U_n \right) = 1}\Big| > n^{-c}
	\label{eq:break}
	\end{equation}
for some $c > 0$ and infinitely many $n$'s, thus contradicting
the assumption that $G$ is a pseudorandom number generator.

First consider the $n$ random variables $A_i(G, S)$, for $i = 1, \cdots, n$, where $A_i(G, S)$ is simply P1's
payoff at round $i$. Since $\sum_{i = 1}^n A_i(G, S) = n U(G, S)$,
	\[
	\sum_{i = 1}^n \e{A_i(G, S)} > n^{1 - k}.
	\]
This implies that there must be an $i$ such that $\e{A_i(G, S)}
> n^{-k}$. Fix that $i$.

The test $T$ takes as input an $n$-bit sequence $x$ and
generates a sequence of plays $s$ according to strategy $S$.
Now $T$ simulates an $n$-stage repeated Matching Pennies game
with strategies $(x,s)$. If P1 wins the $i$-th round then it
will output 1, otherwise the output will be 0. In other words,
$T$ outputs 1 if and only if $A_i(x,s) = 1$. Notice that $T$ runs in time polynomial in $n$.

When $x$ is drawn from the uniform distribution, P1 will win
with probability $1/2$, or $\pr{T(U_n) = 1} = 1/2$.

Now notice, that since $A_i \in \{-1, 1\}$, $\pr{A_i = 1} =
\frac{\e{A_i} + 1}{2}$. This implies that
	\[
	\pr{T(G(U_{l(n)})) = 1} = \frac{\e{A_i} + 1}{2} > \frac{1}{2n^{k}} + \frac{1}{2}.
	\]
Thus
	\[
		\Big|
		\pr{T\left( G_1\left( U_{l(n)} \right) \right) = 1}   -
		\pr{T\left( U_n \right) = 1}\Big| > \frac{1}{2n^{k}},
	\]
which proves the lemma.
\end{proof}

\begin{lemma} \label{one-way-to-nash}
If one-way functions exist and players are bound to run in time polynomial in $n$, then for every $\delta, k > 0$ and
for sufficiently large $n$, the $n$-stage has an $n^{-k}$-Nash
equilibrium in which each player uses at most $n^\delta$ random
bits.
\end{lemma}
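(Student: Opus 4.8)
The plan is to instantiate both players with a cryptographic pseudorandom number generator and then invoke Lemma~\ref{lemma} twice. Fix $\delta>0$. Since one-way functions exist, Theorem~\ref{hillthm} provides a pseudorandom number generator $G\colon\{0,1\}^{n^\delta}\to\{0,1\}^n$. Let each of P1 and P2 draw an independent uniform seed of length $n^\delta$, feed it to $G$, and read off the $n$ output bits as the sequence of $H/T$ plays; call the resulting strategies $G_1$ and $G_2$. Both are computable in time polynomial in $n$ and use exactly $n^\delta$ random bits, so --- working, as in this section, with deviations that are themselves polynomial-size circuit families --- all that remains is to show that $(G_1,G_2)$ is an $n^{-k}$-Nash equilibrium.

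First I would bound P1's equilibrium payoff from below. The strategy $G_2$ runs in polynomial time and $G_1$ is the output of a pseudorandom number generator, so Lemma~\ref{lemma} applied with ``$G$''$=G_1$ and ``$S$''$=G_2$ gives $\bigl|\e{U(G_1,G_2)}\bigr|\le n^{-k}$, hence $\e{U(G_1,G_2)}\ge -n^{-k}$. Next, given an arbitrary polynomial-time deviation $S_1'$ of P1, I would apply Lemma~\ref{lemma} a second time, this time with ``$G$''$=G_2$ (the pseudorandom side) and ``$S$''$=S_1'$, obtaining $\e{U(S_1',G_2)}\le n^{-k}$. Chaining the two estimates yields $\e{U(G_1,G_2)}\ge -n^{-k}\ge \e{U(S_1',G_2)}-2n^{-k}$, which is the $\varepsilon$-Nash condition for P1 up to the harmless factor of $2$: since Lemma~\ref{lemma} holds for every $k>0$, applying it with $k+1$ in place of $k$ absorbs the factor for all large $n$. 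The bound for P2 follows verbatim with the roles of the two players and the sign of $U$ interchanged, using the symmetry of Matching Pennies.

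The one point that deserves care is the bookkeeping of quantifiers. Lemma~\ref{lemma} reads ``for any polynomial-time strategy $S$, for all $k>0$ and sufficiently large $n$,'' so when we range over deviations $S_1'$ in the definition of $\varepsilon$-Nash equilibrium the threshold ``sufficiently large $n$'' depends on the (fixed) circuit family describing $S_1'$ and on $k$; this is exactly the reading we need, because a deviation is a fixed family of polynomial-size circuits and the distinguisher $T$ constructed from it inside the proof of Lemma~\ref{lemma} is again of fixed polynomial size, hence fooled by $G$ for all large $n$. I do not anticipate any genuine obstacle beyond stating this precisely and assembling the two applications of Lemma~\ref{lemma}; the existence of the seed-length-$n^\delta$ generator is immediate from Theorem~\ref{hillthm}.
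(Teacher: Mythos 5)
Your proposal is correct and follows essentially the same route as the paper: both players play the output of a cryptographic pseudorandom generator obtained from Theorem~\ref{hillthm}, and Lemma~\ref{lemma} is applied twice --- once to bound $\e{U(G_1,G_2)}$ and once to bound the payoff of an arbitrary polynomial-time deviation against the opponent's generator --- with the factor of $2$ absorbed by adjusting $k$. The paper merely packages the same two applications of Lemma~\ref{lemma} as a proof by contradiction rather than a direct chain of inequalities.
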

\begin{proof}
Assume, by contradiction, that one-way functions exist but
there are values $\delta > 0$ and $k > 0$ such that the game
has no $n^{-k}$-Nash equilibrium in which players use at most
$n^\delta$ random coins.

Since we assume one-way functions exist, by Theorem
\ref{hillthm} there exist pseudorandom number generators that use $n^\delta$ coins.
Assume both players use the output of such pseudorandom number generators as their
strategies (which we call, respectively, $G_1$ and $G_2$).
Since we are assuming that this is not a $n^{-k}$-Nash
equilibrium, one of the players, say Player 1, must have a
strategy $A$ such that
	\begin{equation} \label{eq:nonash}
	\e{U\left( A, G_2 \right)} >  \e{U\left( G_1, G_2 \right)} + n^{-k}.
	\end{equation}
By Lemma \ref{lemma} we can choose a $k' > 0$ such that
	\begin{equation} \label{eq:nonash1}
	\e{ U\left( A, G_2 \right)} > -n^{-k'} + n^{-k}.
	\end{equation}
Pick $c = k' = k+1$, so that $\e{ U\left( A, G_2 \right)} >
n^{-c}$ for $n > 2$. This contradicts Lemma \ref{lemma},
proving the claim.
\end{proof}

We now prove the opposite direction, that is that the existence
of Nash equilibria that use few random bits implies the
existence of one-way functions.

\begin{lemma} \label{nash-to-one-way}
If for every $\delta > 0$ there is a Nash equilibrium in which
each player uses $n^\delta$ random bits and runs in time polynomial in $n$, then one-way functions exist.
\end{lemma}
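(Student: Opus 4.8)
The plan is to read a pseudorandom number generator out of the equilibrium itself and then appeal to Theorem~\ref{hillthm}. Fix $\delta>0$ and let $(G_1,G_2)$ be the hypothesized Nash equilibrium in which each player uses $n^\delta$ coins and runs in polynomial time. Define a deterministic polynomial-time map $G:\{0,1\}^{2n^\delta}\to\{0,1\}^n$ by: on input $(\rho_1,\rho_2)$, simulate the repeated game with P1 playing $G_1$ on coins $\rho_1$ and P2 playing $G_2$ on coins $\rho_2$, and output the string $(y_1,\dots,y_n)$ of P1's plays. For $\delta<1$ and $n$ large this has seed length $2n^\delta<n$, so it is length-increasing. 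The claim is that $G$ is a pseudorandom number generator; since the hypothesis supplies such a $G$ for \emph{every} $\delta>0$ (hence generators with $l(n)=O(n^\delta)$, which after padding the seed gives $l(n)=n^\delta$ for every $\delta$), Theorem~\ref{hillthm} then yields one-way functions.

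To prove $G$ is pseudorandom I would invoke Yao's Theorem (Theorem~\ref{yaothm}) and instead show $G$ is unpredictable. Suppose it is not: there is a polynomial-size family $\{D_i\}$, a constant $c>0$, and infinitely many $n$ for which some round $i=i(n)$ has $\pr{D_i(y_1,\dots,y_{i-1})=y_i}\ge 1/2+n^{-c}$, the $y_j$'s being the coordinates of $G(U_{2n^\delta})$. I would convert this into a strictly profitable deviation for P2, which is impossible because $(G_1,G_2)$ is an \emph{exact} equilibrium, where no improvement at all is allowed. Let P2 deviate to the (still polynomial-size) strategy $G_2'$ that behaves exactly like $G_2$ at every round except round $i$; at round $i$, having observed P1's plays $y_1,\dots,y_{i-1}$ so far — which, because $G_2'$ has been identical to $G_2$ up to that point, are distributed exactly as the first $i-1$ coordinates of $G(U_{2n^\delta})$ — P2 evaluates the hard-wired predictor and plays the action mismatching $D_i(y_1,\dots,y_{i-1})$. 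This is identical to the equilibrium before round $i$, gains P2 expected payoff at least $2n^{-c}$ at round $i$ (via the relation between win probability and expected payoff used in the proof of Lemma~\ref{lemma}), and afterwards P2 reverts to $G_2$'s prescriptions. Summing over rounds would then give a strict improvement for P2, the desired contradiction; hence $G$ is unpredictable, a pseudorandom number generator, and one-way functions follow.

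The step I expect to be the main obstacle is exactly the clause ``afterwards P2 reverts to $G_2$''. Because $G_1$ is allowed to be adaptive, P1's plays at rounds $i+1,\dots,n$ may depend on P2's round-$i$ action, so P2's one-round deviation perturbs the rest of the game and a priori could cost P2 more than the $2n^{-c}$ it gained. The tool I would use is a structural feature of matching pennies: a player can always randomize to secure value $0$ in any single round, so there is no punishing continuation, and I would argue that in an exact equilibrium the remaining rounds contribute (essentially) $0$ in expectation no matter what happens at round $i$, making the perturbation benign. Making this rigorous — in particular coping with histories that become reachable only after P2's deviation, and with the possibility that $G$ is predictable at an early round while being uniform near the end of the game — is where the real work lies; I would approach it by a backward-induction/subgame analysis showing that each round's play conditioned on the reached history is a one-round matching-pennies equilibrium (hence uniform), combined with choosing P2's deviating action to stay inside the support of $G_2$'s conditional round-$i$ play so that no off-path history is created. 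The remaining ingredients — folding $\rho_2$ into the seed, checking that $G_2'$ has polynomial size, and the arithmetic relating prediction advantage to payoff — are routine, along the lines of Lemma~\ref{lemma}.
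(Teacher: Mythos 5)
Your overall route is the one the paper takes: tie the equilibrium plays to pseudorandomness via Yao's Theorem and Theorem~\ref{hillthm}, and turn a predictor into a profitable deviation (arguing forward, ``equilibrium $\Rightarrow$ unpredictable $\Rightarrow$ PRNG $\Rightarrow$ OWF,'' rather than by contradiction from ``no OWF,'' is only a cosmetic difference). The substantive difference is in how the predictor becomes a deviation, and that is where your version has a genuine gap --- one you partly flag yourself, but whose proposed repair cannot work. You want the one-round deviation at round $i$ to be harmless for the continuation because ``each round's play conditioned on the reached history is a one-round matching-pennies equilibrium (hence uniform).'' That is impossible in this setting: each player has only $n^\delta$ random bits, so the transcript is a deterministic function of $2n^\delta$ seed bits and has entropy at most $2n^\delta$, whereas $n$ rounds of conditionally uniform play would force entropy $n$. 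The unavailability of round-by-round uniform play is the whole point of the paper (Lemma~\ref{lemma:unbounded1} quantifies it), and for the same reason $G_2$'s conditional round-$i$ play at a given history may be deterministic, so ``staying inside the support'' while deviating can be outright impossible. A second, unflagged problem: even if the continuation were unaffected, the value of the deviation to P2 is the \emph{difference} between the new round-$i$ payoff (at least $2n^{-c}$) and P2's old round-$i$ payoff $e_i$, and nothing rules out $e_i \ge 2n^{-c}$; and since you only built a generator/predictor out of P1's plays, you have no fallback when it is P1 rather than P2 who stands to gain.

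The paper's proof avoids both issues with a blunter deviation: the deviating player abandons his equilibrium strategy entirely and plays the predictor's output (respectively its complement) at \emph{every} round, so there is nothing to revert to and no continuation to analyze --- the deviator's expected average payoff is bounded below by the per-round prediction advantage --- and a case analysis on the sign of $w = \e{U(A,B)}$ decides which player has the profitable deviation (the one whose equilibrium payoff is $\le 0$), using predictors for both players' output streams. (The paper's version still leans implicitly on the predictor retaining its advantage against the play distribution induced by the deviation itself, since the opponent is adaptive; but it does not need your subgame analysis.) If you want to keep your local one-round deviation, you must confront the fact that P1's plays after round $i$ may depend on P2's round-$i$ action; the wholesale switch to the predictor strategy is the standard way around this, and is what the paper does.
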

\begin{proof}
Let $(A,B)$ be such a Nash equilibrium and assume, by
contradiction, that one-way functions don't exist. This implies
that pseudorandom number generators can't exist (Goldreich
\cite{goldreich2004foundations}), and so, $A$ and $B$ can't be
sequences that are computationally indistinguishable from
uniform.

Thus, by Yao's theorem (Theorem \ref{yaothm}), we know that
there are polynomial size circuit families $\left\{ C_i
\right\}_{i \in \mathbb{N}}$ and $\left\{ D_i \right\}_{i \in
\mathbb{N}}$ such that
	\begin{align*}
	\pr{C_i(y_1, \dots, y_{i}) = y_{i+1}} &> 1/2 + \delta_1 \\
	\pr{D_i(z_1, \dots, z_{i}) = z_{i+1}} &> 1/2 + \delta_2,
	\end{align*}
for some $\delta_1, \delta_2 > 0$, where $A(x_1, \dots,
x_{l_1(n)}) = (y_1, y_2, \dots, y_{n})$ and $B(x_1, \dots,
x_{l_2(n)}) = (z_1, z_2, \dots, z_{n})$.

To get a contradiction it is sufficient to show that players
are better off by using the predictor circuits $C$ and $D$.
Consider Player 1: using $D$, at each round he can guess, given
the previous history, the opponent's next move with probability
$\frac{1}{2} + \delta_1$.  Thus his expected payoff at any
round $t$ is
	\[
	\e{h(d^t, b^t)} > \frac{1}{2} + \delta_1 - \frac{1}{2} + \delta_1 = 2\delta_1,
	\]
where the expectation is over the internal coin tosses of the
predictor circuit $D$. The overall expected payoff is
	\[
	\e{U(D, B)} > \frac{1}{n} \sum_{t = 1}^n 2\delta = 2\delta_1.
	\]
Now, let $w = \e{U(A, B)}$ be the value of the expected payoff
when players play $(A,B)$. Consider the following cases:
	\begin{enumerate}
	\item $w \le 0$: this implies that Player 1 could gain
$2\delta_1$ by using strategy $D$,
	\item $w > 0$: by definition Player 2's expected payoff
is $-\e{U(A, B)} < 0$, so Player 2 can achieve a higher
payoff by using his predictor circuit $C$,
	\end{enumerate} In both cases we see that $(A, B)$
can't be a Nash equilibrium, a contradiction.
\end{proof}

\section{Exchanging Time for Randomness} \label{sec:nw}
In this section we determine conditions under which a
$\varepsilon$-Nash equilibrium can arise, given that one of the
players has only a logarithmic amount of randomness and his
opponent must run in time $n^k$ for some fixed $k$. This shows
how we can trade off randomness for time; the player with
$O(\log n)$ random bits runs in time polynomial in $n$, while
the player with more random bits runs in fixed polynomial time.

\begin{theorem} \label{thm:logn-vs-poly}
Assume there exists $f \in \textsf{DTIME}(2^{O(l(n))})$ and
$\varepsilon > 0$ such that no circuit of size at most
$2^{\varepsilon l(n)}$ can compute $f$ and that one-way
functions exist. Let Player 1's strategies be circuits of size
at most $n^k$ that use at most $n^{\delta}$ random bits for
some $k > 2 + c \delta$, where $c \ge 1$ is a constant related
to the implementation of a cryptographic pseudorandom number generator. Assume Player 2
has access to only $M \log n$ random bits. As long as $M > Ck$,
where $C$ is the constant in Theorem \ref{thm:nw}, then for all
$\varepsilon > 0$ and sufficiently large $n$ there is a
$\varepsilon$-Nash equilibrium.
\end{theorem}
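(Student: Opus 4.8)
The plan is to exhibit the equilibrium strategies explicitly and then argue no profitable deviation exists, using both flavors of pseudorandom number generator. Player~2, who has only $M\log n$ random bits with $M > Ck$, uses a complexity pseudorandom number generator $G_{\mathrm{NW}}$ (from Theorem~\ref{thm:nw}, whose existence is guaranteed by the assumed hardness of $f$) with seed length $l(n) = Ck\log n \le M\log n$ to stretch its seed to $n$ bits, and plays the resulting $n$-bit string. Player~1, who may run in arbitrary polynomial time and has $n^{\delta}$ random bits, uses a cryptographic pseudorandom number generator $G_{\mathrm{cr}}$ (from Theorem~\ref{hillthm}, using the one-way function assumption) with seed length $n^{\delta}$ to produce its $n$ plays. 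The claim is that $(G_{\mathrm{cr}}, G_{\mathrm{NW}})$ is an $\varepsilon$-Nash equilibrium.

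The verification splits into two deviation arguments. First, Player~1's best response: Player~1 runs in arbitrary polynomial time, so its best deviation $A$ is \emph{some} polynomial-time strategy; since Player~2 plays the output of a \emph{cryptographic}\/ — no wait, Player~2 plays $G_{\mathrm{NW}}$, so I must instead argue against a poly-time opponent of $G_{\mathrm{NW}}$. The key observation is that the test $T$ built in the proof of Lemma~\ref{lemma} from a profitable deviation has size polynomial in $n$, say at most $n^{k_0}$ for a fixed $k_0$ determined by $\delta$, $c$, and the running time bound; by choosing the constant $C$ in the seed length $l(n)=C k\log n$ large enough relative to $k_0$ (which is where $M > Ck$ and $k > 2 + c\delta$ enter), the complexity pseudorandom number generator $G_{\mathrm{NW}}$ fools all circuits of size $n^{k_0}$, hence fools $T$, so no $n^{-k}$-profitable deviation for Player~1 exists. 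This gives the analog of Lemma~\ref{lemma} with $G_{\mathrm{NW}}$ in the second slot. Second, Player~2's best response: Player~2's strategies are circuits of size at most $n^k$; a profitable deviation would, via the same reduction, yield a circuit of size roughly $n^k$ distinguishing $G_{\mathrm{cr}}(U_{n^{\delta}})$ from uniform, contradicting that $G_{\mathrm{cr}}$ is a \emph{cryptographic} pseudorandom number generator (which fools \emph{all}\/ polynomial-size circuits, in particular size-$n^k$ ones). Combining, neither player can gain more than $n^{-k} \le \varepsilon$ for large $n$, establishing the $\varepsilon$-Nash equilibrium.

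The main obstacle is the bookkeeping of circuit sizes and seed lengths so that the two pseudorandom number generators' guarantees line up. Concretely: the reduction that turns a deviation of Player~1 into a test $T$ must itself be implemented as a circuit, and $T$ must simulate Player~2's generator $G_{\mathrm{NW}}$ (computable in time $2^{l(n)} = n^{Ck}$) as well as the deviating strategy $A$; I need to confirm that the total size of $T$ stays below the $2^{\varepsilon l(n)}$ threshold that $G_{\mathrm{NW}}$ can withstand, which forces the quantitative relation $M > Ck$ together with $k > 2 + c\delta$ (the $c\delta$ accounting for the cost of evaluating $G_{\mathrm{cr}}$ inside the simulation, and the $+2$ absorbing polynomial overhead in the simulation of an $n$-round game). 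I would state a clean lemma — an analog of Lemma~\ref{lemma} asserting $|\e{U(S, G_{\mathrm{NW}})}| \le n^{-k}$ for all size-$n^{k}$ circuits $S$, and $|\e{U(G_{\mathrm{cr}}, S)}| \le n^{-k}$ for all polynomial-size $S$ — prove each by the distinguisher-from-deviation construction already used in Lemma~\ref{lemma}, and then the equilibrium claim follows exactly as in Lemma~\ref{one-way-to-nash}. Everything else (the first $n(1-\gamma)$-type padding is not even needed here; the generators already stretch to full length $n$) is routine.
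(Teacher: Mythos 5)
Your overall plan is the paper's: pair the cryptographic generator with the player holding $n^{\delta}$ seed bits, pair the complexity generator with the player holding $M\log n$ seed bits, and convert any profitable deviation into a distinguisher exactly as in Lemma~\ref{lemma}. The ``clean lemma'' you state at the end --- $|\e{U(S, G_{\mathrm{NW}})}| \le n^{-k}$ for all size-$n^{k}$ circuits $S$, and $|\e{U(G_{\mathrm{cr}}, S)}| \le n^{-k}$ for all polynomial-size $S$ --- is the right pair of statements and is what the paper proves.

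However, your verification paragraph swaps the computational restrictions on the two players, and in one of the two directions the swap is load-bearing. In the theorem it is Player~1 (the $n^{\delta}$-bit, cryptographic-PRNG player) whose strategies are a priori restricted to circuits of size $n^{k}$, while Player~2 (the $M\log n$-bit player) may use arbitrary polynomial-size circuits; you assert the opposite, that Player~1 ``may run in arbitrary polynomial time'' and that ``Player~2's strategies are circuits of size at most $n^{k}$.'' For Player~2's deviations the error is harmless, since the cryptographic generator fools all polynomial-size circuits anyway. For Player~1's deviations it breaks the argument: a complexity pseudorandom number generator with seed length $Ck\log n$ fools only circuits up to one \emph{fixed} polynomial size, so the player deviating against $G_{\mathrm{NW}}$ must be restricted to size $n^{k}$ \emph{before} the seed budget $M\log n$ is set. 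Your patch --- ``choose $C$ large enough relative to $k_0$,'' with $k_0$ coming from ``the running time bound'' --- presupposes exactly the fixed bound you denied Player~1 two lines earlier; if Player~1 were genuinely unbounded-polynomial there would be no such $k_0$ and no admissible $C$. This order-of-quantifiers distinction is precisely the difference between the two kinds of generator that the theorem exploits, so it cannot be waved away as bookkeeping. Once the roles are restored as in the theorem statement, your argument coincides with the paper's; the one ingredient the paper adds that you omit is an explicit Blum--Micali-style construction of $G_{\mathrm{cr}}$ running in time $O(n^{2+c\delta})$, which is what makes the hypothesis $k > 2 + c\delta$ guarantee that $G_{\mathrm{cr}}$ itself fits inside Player~1's size-$n^{k}$ strategy class.
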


\begin{proof}
Let $G_1$ be the cryptographic pseudorandom number generator available to Player 1 and
$G_2$ be the complexity pseudorandom number generator used by player 2. Furthermore let
$\mathcal{S}_1$ be the set of all possible strategies for P1
(for all $\delta > 0$ circuits of size at most $n^k$ that use
$n^{\delta}$ random bits), and $\mathcal{S}_2$ the set of
strategies available to P2 (polynomial size circuit families
and $M \log n$ random coins). We will show that for all
$\varepsilon > 0$ and sufficiently large $n$, $(G_1, G_2)$ is a
$\varepsilon$-Nash equilibrium with the required properties.

First we argue that, for all $\gamma > 0$ and sufficiently
large $n$, $|\e{U(G_1, G_2)}| < \gamma$. The proof of this fact
is similar to the proof of Lemma \ref{lemma}, showing by way of
contradiction, that if $|\e{U(G_1, G_2)}| \ge \gamma$ then we
can build a test for the cryptographic pseudorandom number generator $G_1$.

Now we show that, for the appropriate setting of parameters,
$G_1$ fools $\mathcal{S}_2$ and $G_2$ fools $\mathcal{S}_1$.
For any $k$, Player 2 can fool circuits of size $n^k$ by using
$C \log n^k = Ck \log n$ random bits. So, for $M > Ck$, $G_2$
fools $\mathcal{S}_1$. Notice also that since Player 2 runs in
time $O(n^{Ck})$, the cryptographic pseudorandom number generator $G_1$ fools
$\mathcal{S}_2$. Let $h$ be the one-way permutation used by the
pseudorandom number generator $G_1$, and assume $h$ us computable in time $n^c$ for some
$c > 0$. Given $h$, $G_1$ is defined as follows: let $x,y \in
\{0,1\}^{\frac{n^{\delta}}{2}}$, and let $(x,y)$ be $G_1$'s
seed (notice that $|(x,y)| = n^\delta$), then
	\[
	G_1(x, y) = \left( f^n(x) \odot y, f^{n-1}(x) \odot y, \dots, f(x) \odot y \right),
	\]
where $x \odot y = \sum_i x_i y_i \mod 2$. There are $O(n^{2})$
applications of $h$, so $G_1$ runs in time $O(n^{2 +
c\delta})$. So, for $k \ge 2 + c\delta$, $G_1$ fools
$\mathcal{S}_2$.

At this point we're almost done. As in Lemma
\ref{one-way-to-nash} assume, by contradiction, that the
assumptions in the theorem hold but $(G_1, G_2)$ is not a
$\varepsilon$-Nash equilibrium for some $\varepsilon > 0$. This
implies that at least one of the two players can improve his
expected payoff by more than $\varepsilon$ by switching to some
other strategy. First consider P2, and assume there is a
strategy $S_2 \in \mathcal{S}_2$ such that
	\[
	\e{U(G_1, S_2)} > \e{U(G_1, G_2)} + \varepsilon.
	\]
As in Lemma \ref{one-way-to-nash} this implies that $S_2$ would
be a test for the cryptographic pseudorandom number generator $G_1$, contradicting the
fact that $G_1$ fools $\mathcal{S}_2$. Similarly, assume P1 has
a strategy $S_1 \in \mathcal{S}_1$ such that
	\[
	\e{U(S_1, G_2)} > \e{U(G_1, G_2)} + \varepsilon.
	\]
Again, $S_1$ can easily be made into a test for $G_2$,
contradicting the fact that $G_2$ fools $\mathcal{S}_1$.
\end{proof}

\section{Infinite Play} \label{sec:infinite}
We now consider an infinitely repeated game of Matching
Pennies, and show that, if utilities are time discounted, we
can always achieve a $\varepsilon$-Nash equilibria using a
large enough (but finite) amount of random coins. First we
determine the least amount of randomness required to achieve a
$\varepsilon$-Nash equilibrium in the general, i.e.
computationally unbounded, case.

\begin{lemma} \label{lemma:infinite1}
For all discount factors $\delta \in (0,1)$ and all
$\varepsilon > 0$, there is an $\varepsilon$-Nash equilibrium
in which the players use only $n$ random bits, for $n >
\frac{\log \varepsilon (1- \delta)}{\log \delta}$.
\end{lemma}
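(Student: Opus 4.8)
The plan is to mimic the strategy profile from Lemma~\ref{lemma:delta-eq}, adapted to the infinitely repeated discounted setting: both players use their $n$ random bits to play uniformly at random in rounds $1, \dots, n$, and from round $n+1$ onward they switch to a cheap deterministic pattern (say P1 always plays $H$, while P2 alternates $H, T, H, T, \dots$, or any other fixed scheme so that neither can gain more than a bounded amount per round in the tail). I first fix the normalization of discounted utility — say the payoff to P1 is $(1-\delta)\sum_{t\ge 1}\delta^{t-1} h(a^t,b^t)$, so that the per-round payoffs lie in $[-1,1]$ and the total is normalized to $[-1,1]$ as well — and then verify the $\varepsilon$-equilibrium condition.

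The key steps, in order: (1) In rounds $1$ through $n$ both players play the unique one-shot Nash strategy, so by the usual backward/forward argument no unilateral deviation in that block can help — against a uniform opponent any action yields expected payoff $0$. (2) Bound the total discounted weight of the tail: the rounds after $n$ carry total discount mass $(1-\delta)\sum_{t > n}\delta^{t-1} = \delta^{n}$. Since any deviation in the tail changes each round's payoff by at most $2$ but we are comparing to the fixed-pattern payoff, the most a player can gain by deviating from round $n+1$ onward is at most the discounted tail mass, i.e. at most $\delta^n$ (after the $(1-\delta)$ normalization). (3) Therefore the profile is a $\delta^n$-Nash equilibrium; demanding $\delta^n \le \varepsilon$ and solving for $n$ gives exactly $n > \frac{\log \varepsilon(1-\delta)}{\log\delta}$ once one accounts for the normalization constant — the stated bound is just the threshold at which the discounted tail weight drops below $\varepsilon$. (I would double-check whether the $(1-\delta)$ inside the $\log$ comes from the normalization factor or from writing $\sum_{t>n}\delta^{t-1}=\frac{\delta^n}{1-\delta}$ without renormalizing; either way the algebra is routine and the inequality is the content.)

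The main obstacle is not the calculation but getting the deviation accounting exactly right: one must be careful that a deviator in the tail is compared against the \emph{fixed continuation payoff} of the prescribed strategy (not against the zero-sum value), and that the prescribed tail pattern is chosen so that the prescribed player is not already losing by an unbounded amount — otherwise the "gain from deviating" could exceed the discount mass. Choosing P1 $\equiv H$ versus P2 alternating keeps the tail payoffs within a small band, so the gain from any tail deviation is at most the total tail discount weight, which is what we want. Everything else — the uniform block being unimprovable, the geometric-series bound, inverting the inequality for $n$ — is immediate and I would present it tersely.
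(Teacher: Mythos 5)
Your proposal is correct and follows essentially the same route as the paper: the identical strategy profile (uniform play for the first $n$ rounds, then P1 constant $H$ against P2 alternating), the observation that the uniform block is unimprovable, a bound of $\sum_{t>n}\delta^t = \delta^n/(1-\delta)$ on the gain from any tail deviation, and solving $\delta^n/(1-\delta) \le \varepsilon$ for $n$. The paper works with the unnormalized discounted sum, so the $(1-\delta)$ in the logarithm comes from the geometric series rather than from a normalization factor, exactly as one of your two flagged readings.
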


\begin{proof}
Given $\delta, \varepsilon > 0$ consider the following
strategies: both players play the Nash equilibrium strategy for
the first $n$ rounds. After this P1 will always play $H$, while
P2 will play $H$ and $T$ alternatively. The overall expected
payoff is 0. However, after round $n$, both players could
switch to a strategy that always wins, achieving a total
expected payoff of $0 + \sum_{t = n}^{\infty} \delta^t =
\frac{\delta^n}{1 - \delta}$. To ensure that our strategies are
indeed a $\varepsilon$-Nash equilibrium we just need to make
sure that
	\[
	0 > \frac{\delta^n}{1 - \delta} -\varepsilon.
	\]
Rearranging and taking logarithms we get $n > \frac{\log
\varepsilon (1- \delta)}{\log \delta}$.
\end{proof}

Now we consider players' whose strategies are families of
polynomial size Boolean circuits (as in Section \ref{sec:eff}),
and assume one-way functions exist. We first give a version of
Lemma \ref{lemma} for time discounted utilities on a finite
number of rounds.

\begin{lemma} \label{lemma:discounted}
Assume one-way functions exist, and let $G = (g^1, \dots, g^n)$
be the strategy corresponding to the output of a cryptographic
pseudorandom number generator. Let $S = (s^1, \dots, s^n)$ be any strategy. For all
$\delta \in (0,1)$, $k > 0$ and for sufficiently large $n$
	\[
	\Big| \e{\sum_{t = 1}^n \delta^t h(g^t, s^t)} \Big| \le n^{-k} \text{ and } \Big| \e{\sum_{t = 1}^n \delta^t h(s^t, g^t)} \Big| \le n^{-k}.
	\]
\end{lemma}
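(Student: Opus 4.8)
The plan is to mirror the proof of Lemma~\ref{lemma} almost verbatim, replacing the average-payoff test by a discounted version and checking that the discounting only helps. First I would argue by contradiction: suppose there is a $k>0$ such that $\bigl|\e{\sum_{t=1}^n \delta^t h(g^t,s^t)}\bigr| > n^{-k}$ for infinitely many $n$. Write the discounted payoff as a weighted sum of the per-round payoffs $A_t(G,S) \in \{-1,1\}$, namely $\sum_{t=1}^n \delta^t A_t(G,S)$, so that $\sum_{t=1}^n \delta^t \e{A_t(G,S)}$ has absolute value exceeding $n^{-k}$. Since $\sum_{t=1}^n \delta^t < \frac{1}{1-\delta}$ is a constant (independent of $n$), there must be some round $i$ with $\delta^i |\e{A_i(G,S)}| > n^{-k}(1-\delta)/n$, hence $|\e{A_i(G,S)}| > c_\delta\, n^{-k-1}$ for a constant $c_\delta = (1-\delta)\delta^{-n}$ — but here one must be careful: $\delta^{-n}$ blows up, so instead I would only use that \emph{some} round $i \le n$ has $|\e{A_i(G,S)}|$ bounded below by $n^{-k}(1-\delta)$ divided by $n\delta^i$, and since we only need \emph{infinitely many} $n$ and a \emph{polynomial} lower bound, it suffices to note that the offending round $i$ can be taken with $\delta^i \ge \delta^n \ge$ (something); cleaner is to observe that $\bigl|\sum_t \delta^t \e{A_t}\bigr| \le \sum_t |\e{A_t}| \le n \cdot \max_t |\e{A_t}|$, so $\max_t |\e{A_t(G,S)}| > n^{-k-1}$, and fix that $i$.

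Once $i$ is fixed, I would build the same distinguisher $T$ as in Lemma~\ref{lemma}: on input an $n$-bit string $x$, simulate strategy $S$ to get a play sequence $s$, run the $n$-round Matching Pennies game with $(x,s)$, and output $1$ iff P1 wins round $i$ (i.e.\ iff $A_i(x,s)=1$). This runs in time polynomial in $n$ because $S$ does. On uniform input, $\pr{T(U_n)=1}=1/2$; on $G(U_{l(n)})$, using $\pr{A_i=1} = (\e{A_i}+1)/2$, we get $|\pr{T(G(U_{l(n)}))=1} - \pr{T(U_n)=1}| = |\e{A_i}|/2 > \frac{1}{2}n^{-k-1}$, contradicting that $G$ is a cryptographic pseudorandom number generator (which exists by Theorem~\ref{hillthm} since one-way functions exist). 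The second inequality, with the roles of $G$ and $S$ swapped, follows by the symmetric argument (P2's round-$i$ win probability).

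The only real subtlety — and the step I would flag as the main obstacle — is handling the discount factor $\delta$ cleanly in the quantifier chase: naively pulling out $\delta^i$ introduces an $n$-dependent constant $\delta^{-n}$, which would destroy the polynomial bound. The fix is to bound $\bigl|\sum_t \delta^t \e{A_t}\bigr|$ from above by $\sum_t |\e{A_t}| \le n\,\max_t|\e{A_t}|$ (discarding the $\delta^t \le 1$ factors \emph{upward}), which loses only a factor of $n$ and hence preserves the polynomial-in-$n$ distinguishing advantage. I would state this as: from $|\e{\sum_t \delta^t h(g^t,s^t)}| > n^{-k}$ we immediately get $\max_t |\e{A_t(G,S)}| > n^{-(k+1)}$, and then proceed exactly as in Lemma~\ref{lemma} with $k$ replaced by $k+1$ throughout. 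Everything else is routine and identical to the finite average-payoff case.
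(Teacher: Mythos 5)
Your proposal is correct and follows essentially the same route as the paper's proof: reduce the discounted sum to a single round by discarding the factors $\delta^t \le 1$ (losing only a factor of $n$ via pigeonhole), then run the round-$i$ win/lose distinguisher of Lemma~\ref{lemma}. In fact your final accounting --- working with signed payoffs $A_i \in \{-1,1\}$ so that the distinguishing advantage is $|\e{A_i}|/2 > \tfrac{1}{2}n^{-k-1}$ --- is cleaner than the paper's, which uses indicators $A_t \in \{0,1\}$ and concludes with a gap of $\tfrac{1}{2} - n^{-k-1}$, an inequality that does not actually follow from $\e{A_t} > n^{-k-1}$ alone.
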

\begin{proof}
Again we give the proof only for the first inequality. Assume,
by contradiction, that $| \e{\sum_{t = 1}^n \delta^t h(g^t,
s^t)} | > n^{-k}$ for some $\delta$, $k$ and infinitely many
$n$'s. Consider the random variables $A_1(c_1^1, c_2^1), \dots,
A_n(c_1^n, c_2^n)$, defined as $A_t(c_1^t, c_2^t) = 1$ if P1
wins round $t$ when playing according to $(C_1, C_2)$ and 0
otherwise. Let $A(C_1, C_2) = \sum_t \delta^t A_t(c_1^t,
c_2^t)$, so that $\e{A(G, S)} \ge |\e{\sum_{t = 1}^n \delta^t
h(g^t, s^t)}| > n^{-k}$. This implies that there is a $t$ such
that $\delta^t \e{A_t(g^t, s^t)} > n^{-k-1}$, which implies
$\e{A_t(g^t, s^t)} > n^{-k-1}$. Fix that $t$. As in Lemma
\ref{lemma}, consider the test $T$ that, given a sequence of
plays $x$, generates a play $s$ from $S$ and outputs 1 if P1
wins round $t$ and outputs 0 otherwise.

When $x$ is drawn uniformly at random, $\pr{T(U_n) = 1} = 1/2$.
On the other hand, when $x$ is $G$'s output,
	\[
	\pr{T(G(U_{l(n)})) = 1} = \e{A_t} > n^{-k-1}.
	\]
Now
	\[
	\Big| \pr{T(G(U_{l(n)})) = 1} - \pr{T(U_n) = 1} \Big| > \frac{1}{2} - \frac{1}{n^{k+1}} = \frac{1}{n^c},
	\]
for $c > - \frac{\log(1/2 - n^{-1-k})}{\log n} \ge 0$,
contradicting the assumption that $G$ is a pseudorandom number generator.
\end{proof}

Using the above Lemma we can show that, for all discount
factors, we can greatly reduce the amount of random coins
needed to get an $\varepsilon$-Nash equilibrium.

\begin{lemma} \label{lemma:infinite2}
For all discount factors $\delta \in (0,1)$, all $\varepsilon >
0$ and all $\xi > 0$, there is a $n^{-k}$-Nash equilibrium in
which players use only $n^\xi$ random coins, for sufficiently
large $n$'s.
\end{lemma}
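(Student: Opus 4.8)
The plan is to mimic, in the discounted infinite-horizon setting, the combination of the pseudorandom-generator argument of Section~\ref{sec:eff} with the deterministic-tail trick from the proof of Lemma~\ref{lemma:infinite1}. Since we assume one-way functions exist, Theorem~\ref{hillthm} yields, for the given $\xi>0$, a cryptographic pseudorandom number generator $G\colon\{0,1\}^{n^{\xi}}\to\{0,1\}^{n}$. I would let each player run an independent copy of $G$ on a fresh $n^{\xi}$-bit seed to produce $n$ pseudorandom bits, using bit $t$ as the action for round $t$, $1\le t\le n$; for every round $t>n$ Player~1 plays $H$ and Player~2 plays $H,T,H,T,\dots$ alternately, exactly as in Lemma~\ref{lemma:infinite1}. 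After round $n$ neither player has randomness left, so this profile $(G_1,G_2)$ uses only $n^{\xi}$ coins; everything then reduces to checking that $(G_1,G_2)$ is an $n^{-k}$-Nash equilibrium for every $k>0$ and all sufficiently large $n$.

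The verification splits the discounted payoff at round $n$. For rounds $1,\dots,n$, Player~2's strategy (its pseudorandom prefix together with its deterministic continuation) is polynomial-time computable, so Lemma~\ref{lemma:discounted} applied with that strategy in the role of $S$ gives $\bigl|\e{\sum_{t=1}^{n}\delta^{t}h(g_1^{t},g_2^{t})}\bigr|\le n^{-k'}$ for any prescribed $k'$ once $n$ is large; the contribution of rounds $t>n$ has absolute value at most $\sum_{t>n}\delta^{t}=\delta^{n+1}/(1-\delta)$, which is exponentially small in $n$. Hence $|\e{U(G_1,G_2)}|\le n^{-k'}+\delta^{n+1}/(1-\delta)$. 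Now I would consider a deviation $S_1$ by Player~1 to an arbitrary polynomial-size strategy. Its restriction to the first $n$ rounds is again polynomial-time, so Lemma~\ref{lemma:discounted} bounds the discounted payoff it earns against $G_2$ over those rounds by $n^{-k'}$, while over rounds $t>n$ Player~2 plays a fixed alternating sequence, so even winning every such round earns at most $\delta^{n+1}/(1-\delta)$. Thus $\e{U(S_1,G_2)}\le n^{-k'}+\delta^{n+1}/(1-\delta)$, and the improvement over $\e{U(G_1,G_2)}$ is at most $2n^{-k'}+2\delta^{n+1}/(1-\delta)$. Taking $k'=k+1$ and using that $\delta^{n+1}/(1-\delta)$ decays faster than any inverse polynomial, this is below $n^{-k}$ for all large $n$; the case of a deviation by Player~2 is symmetric, using the other inequality of Lemma~\ref{lemma:discounted} and the fact that Player~1 plays the fixed action $H$ after round $n$.

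The hard part, such as it is, will be bookkeeping rather than genuine mathematics: one has to fix the model of strategies in the infinitely repeated game so that Lemma~\ref{lemma:discounted}, which is stated for $n$-round strategies, can be invoked both for the equilibrium profile itself (a generator fooling another generator, exactly as in the proof of Lemma~\ref{lemma} and in Theorem~\ref{thm:logn-vs-poly}) and for an arbitrary polynomial-size deviation, by passing to its restriction to the first $n$ rounds. I also want to be careful that the ``tail'' term $\delta^{n+1}/(1-\delta)$ is being compared against the additive slack $n^{-k}$ of the $\varepsilon$-Nash condition --- here the payoff is the unnormalized discounted sum, as in Lemma~\ref{lemma:infinite1} --- but since for a fixed discount factor this term is exponentially small, it is swamped by $n^{-k'}$ as soon as $n$ is large, and the choice $k'=k+1$ absorbs it. Finally, although the statement mentions an arbitrary $\varepsilon>0$, I would simply observe that the profile constructed is in fact an $n^{-k}$-Nash equilibrium for every $k>0$, hence an $\varepsilon$-Nash equilibrium for all large $n$.
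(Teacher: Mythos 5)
Your proof is correct and follows essentially the same route as the paper: both players play the output of a cryptographic PRNG with seed length $n^{\xi}$ for the first $n$ rounds and the deterministic tail of Lemma~\ref{lemma:infinite1} thereafter, and the verification combines Lemma~\ref{lemma:discounted} on the pseudorandom prefix with the exponentially small tail contribution $\delta^{n+1}/(1-\delta)$. If anything, your closing bookkeeping (taking $k'=k+1$ and noting the tail is swamped by any inverse polynomial) is cleaner than the paper's own final inequality, which picks a non-constant $k'$ and is hard to parse.
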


\begin{proof}
As in the proof of Lemma \ref{lemma:infinite1} we consider the
following strategy for both players: for the first $n$ rounds
play the output of a cryptographic pseudorandom number generator $G$, with seed length
$n^\xi$. Thereafter P1 will always play $H$, while P2 will
alternate between $H$ and $T$. Pick any $k > 0$, we now show
that this is a $n^{-k}$-Nash equilibrium. By Lemma
\ref{lemma:discounted} we can pick $k' = k/2$ such that the
expected utility in the first $n$ rounds lies in the interval
$[-n^{-k'}, n^{-k'}]$. To ensure that this is a $n^{-k}$-Nash
equilibrium we just need to show that
	\[
	-n^{-k'} > n^{-k'} + \frac{\delta^n}{(1 - \delta)} - n^{-k},
	\]
or
	\[
	-2n^{-k'} + n^{-2k'} > \frac{\delta^n}{(1 - \delta)}.
	\]
Now, for any $c > 0$, if we set $k' = \frac{\log\left(
(\sqrt{c+1} - 1)/c \right)}{\log n}$, then the left hand side
of the above inequality is $c$, so that it always holds for
sufficiently large $n$'s.
\end{proof}

Thus, given any $n$ that satisfies the conditions in Lemma
\ref{lemma:infinite1}, there can be a $n^{-k}$-Nash equilibrium
using $n^{\delta}$ coins, for any $\delta > 0$. To see this,
consider an $m$ sufficiently large so that Lemma
\ref{lemma:infinite2} holds and pick $\xi =  \delta \frac{\log
n}{\log m}$.

\section{Conclusions}
We have shown how, in a simple setting, reducing the amount of
randomness available to players affects Nash equilibria. In
particular, if we make no computational assumptions on the
players, there is a direct tradeoff between the amount of
randomness and the approximation to a Nash equilibrium we can
achieve. If, instead, players are bound to run in polynomial
time, we can get very close to a Nash equilibrium with only
$n^{\delta}$ random coins, for any $\delta > 0$.

Some directions for future research include:
	\begin{itemize}
	\item Is it possible to extend Lemma
\ref{lemma:unbounded1} to $m$ player games, for $m > 2$?
Notice that the strategy used in that proof does not
generalize to this setting.
	\item Under what circumstances is it possible to
further reduce the amount of randomness available (say to
$O(\log n)$ for both players)?
	\item Is it possible to extend these results to general zero-sum games or even non zero-sum games?
	\end{itemize}

\paragraph{Acknowledgments} We wish to thank Tai-Wei Hu, Peter Bro Miltersen, Rahul Santhanam and Rakesh Vohra for fruitful discussions.

\end{document}